\documentclass[runningheads]{llncs}

\usepackage{silence}
\newcommand{\ourtitle}{Recovering Explanations from Transformed Rule-Based Ontologies}

\newcommand{\paperdoi}{10.1007/XXX}

\newif\iffullversion
\fullversiontrue
\newcommand{\fullorconf}[2]{\iffullversion#1\else#2\fi}

\newcommand{\confonly}[1]{\iffullversion\else#1\fi}

\usepackage[UKenglish]{babel}
\usepackage[utf8]{inputenc}
\usepackage[T1]{fontenc}
\usepackage{lmodern}

\usepackage{graphicx}
\usepackage{csquotes}
\usepackage{booktabs}
\usepackage{placeins}
\usepackage{xspace}
\usepackage{tikz}
\usetikzlibrary{arrows.meta, positioning}

\usepackage[hidelinks,pdftitle=\ourtitle]{hyperref}
\usepackage{color}

\usepackage{needspace}
\usepackage{mathtools}
\usepackage{amsmath,amssymb}
\usepackage{orcidlink}
\usepackage{fontawesome5}

\input{macros}

\renewcommand{\orcidID}[1]{\,\orcidlink{#1}}
\makeatletter
\newcommand\blfootnote[1]{{\def\@thefnmark{}\@footnotetext{#1}}}
\makeatother

\iffullversion
  \pdfpagewidth=\paperwidth
  \pdfpageheight=\paperheight
\fi

\begin{document}

\iffullversion
  \title{\ourtitle\thanks{This is the technical report accompanying our
      ISWC'26 paper~\protect\cite{IKM:ISWC-26}.}}
\else
  \title{\ourtitle}
\fi

\author{%
  Alex Ivliev\textsuperscript{(\faEnvelope[regular])}\orcidID{0000-0002-1604-6308} \and
  Markus Krötzsch\orcidID{0000-0002-9172-2601} \and
  Maximilian Marx\orcidID{0000-0003-1479-0341}
}
\authorrunning{A.~Ivliev et al.}
\institute{%
  Knowledge-Based Systems Group, TU Dresden, Dresden, Germany \\ \email{\{alex.ivliev, markus.kroetzsch, maximilian.marx\}@tu-dresden.de}
}

\maketitle              

\confonly{%
  \blfootnote{%
    \textcopyright{} The Author(s), under exclusive license to Springer Nature Switzerland AG 2026\\
    Editors et al. (Eds.): ISWC 2026, LNCS XXXXX, pp. 1--20, 2026.\\
    \url{https://doi.org/\paperdoi}
  }
}
\begin{abstract}
  Datalog rules are often used to define ontologies over Knowledge Graphs.
  Rule reasoners routinely optimise such ontologies
  by rewriting their rules into a form that can be evaluated more efficiently.
  These transformations preserve the entailed facts,
  but not the structure of the underlying derivations.
  A proof tree under the rewritten rules explains why a fact holds,
  but does not readily yield an explanation in terms of the original rules.
  We study the problem of constructing,
  from a proof of entailment under the rewritten rules,
  a proof under the original ones:
  we establish its computational complexity
  and identify two practically relevant languages for specifying proof transformations.

\keywords{Datalog \and Proof transformations \and Explainability}

\end{abstract}

\section{Introduction}

Datalog and its numerous extensions form a family of rule languages
that can express ontological knowledge in the Semantic Web
context~\cite{KT16:BreakingKGRules,BSG:Vadalog18,GIKM2022}. 
OWL~RL and OWL~EL, two profiles of the Web Ontology Language OWL~2~\cite{owl2-profiles}, 
can be captured by 
Datalog rules~\cite{DBLP:journals/semweb/UrbaniPHB14,Kroetzsch10:elreason,Kroetzsch11:elreason},
and extensions of Datalog cover even larger subsets of
OWL~2~\cite{CDK18:combined-hornalchoiq,CDK:RuleReasoning:KI20}.
Datalog is also used for query answering~\cite{CaliGL12:datalogplusontologies,DBLP:conf/semweb/CarralDK17},
for encoding ontological reasoning~\cite{Kroetzsch11:elreason}, 
for data extraction and analysis~\cite{Piro+:RDFoxKaiserPermanente16,DIADEM14}, 
and for transforming Knowledge Graphs~\cite{SkvortsovXBL25,DGHIKMM2026}.
A key benefit of Datalog as an ontology language is the inherent explainability of its consequences: 
any inferred fact can be justified by a \emph{proof tree} 
that shows which rules were applied to which facts, 
tracing every inference back to the input.
Computing such proof trees is well-studied for both
OWL~\cite{KPHS:owl-justifications07,HPS08:lacjust,ABBDKM22:evonne,DBLP:conf/lpar/AlrabbaaBBKK20,ABKK22:exomq,DBLP:journals/cgf/MendezAKLBD23}
and
Datalog~\cite{EKM:exruleprov:rr2022,DBLP:journals/toplas/ZhaoSS20,Green+:ProvSemiRings07,DBLP:journals/pacmmod/CalauttiLPS24}.
For Datalog, it usually takes the form of why-provenance, 
which asks for not one, 
but \emph{all} proof trees of a fact; 
selecting the \enquote*{most helpful} among them is a research topic in its own
right~\cite{A+:elproofshapes:dl25,DBLP:conf/dlog/BorgwardtHKW20}.
Even when proof trees grow very large, 
summarisation and visualisation make them accessible to
users~\cite{MGWIKR26:nev-eval,DGHIKMM2026,DBLP:journals/cgf/MendezAKLBD23}.
Another major strength of Datalog is 
the availability of highly optimised engines that support efficient reasoning 
over knowledge graphs with billions of 
facts~\cite{N+15:RDFoxToolPaper,UJK:VLog2016,Jordan+:Souffle16,Ivliev+:Nemo2024,google-logica-docs}.
This efficiency is partly achieved 
by rewriting the input rules into an equivalent, optimised ontology, 
using techniques such as program minimisation~\cite{Sagiv88}, 
the Magic Sets rewriting for goal-directed evaluation~\cite{BancilhonMSU86}, 
removal of redundant parameters~\cite{RamakrishnanBK88}, 
folding and unfolding of rules~\cite{PettorossiP94}, and propagation of filter conditions
through recursive rules~\cite{HK2026}. 
Some of these techniques were even discovered by studying transformations on proof
trees~\cite{DBLP:journals/jcss/RamakrishnanSUV93,CV92:equivalence}.
Such transformations preserve the consequences of an ontology,
but may substantially change the structure of its derivations,
as demonstrated in the following example:

\begin{example}\label{ex:transitivity-intro}
  The transitive closure $r$ of a binary relation~$e$ can be
  defined by two equivalent ontologies $\aprogram_1$ and $\aprogram_2$:

  \medskip\noindent
  \begin{minipage}[t]{0.46\textwidth}
  \textbf{Ontology $\aprogram_1$ (bushy):}
  \begin{align}
      r(x, y) &\leftarrow e(x, y) \label{rule:intro-bushy-base} \\
      r(x, y) &\leftarrow r(x, z), r(z, y) \label{rule:intro-bushy-step}
  \end{align}
  \end{minipage}%
  \hfill
  \begin{minipage}[t]{0.46\textwidth}
  \textbf{Ontology $\aprogram_2$ (left-linear):}
  \begin{align}
      r(x, y) &\leftarrow e(x, y) \label{rule:intro-lin-base} \\
      r(x, y) &\leftarrow r(x, z), e(z, y) \label{rule:intro-lin-step}
  \end{align}
  \end{minipage}

  \medskip
  Both ontologies derive the same $r$-facts on any input database.
  However,
  the bushy rule~\eqref{rule:intro-bushy-step} 
  allows splitting a path at any intermediate node,
  which may result in balanced proof trees of logarithmic depth.
  The left-linear rule~\eqref{rule:intro-lin-step}
  extends a path by one edge at a time,
  producing left-leaning chains.
  Figure~\ref{fig:intro-transformation}
  shows proof trees for $r(a,d)$
  over $\Dnter = \{e(a, b), e(b, c), e(c, d)\}$.
\end{example}
A Datalog engine may rewrite one of these ontologies into the other to evaluate it efficiently. 
Proof trees are then computed over the rewritten ontology, 
and thus fail to explain why a fact follows under the \emph{original} one. 
We study the problem of recovering such an explanation, 
i.e., of constructing a corresponding proof tree for the original ontology 
from a proof tree for the transformed ontology. 
We establish the computational complexity of this task (Section~\ref{sec:hardness}) 
and, building on tree transducers~\cite{Baker78b} 
and Monadic Second-Order-definable transductions~\cite{Courcelle94,EngelfrietM99}, 
identify two suitable formalisms for specifying proof tree transformations (Section~\ref{sec:languages}).
\fullorconf{The appendix of this paper}{An extended technical report\footnote{\url{https://iccl.inf.tu-dresden.de/web/Inproceedings3474/en}}} contains further examples.

\begin{figure}[t]
\centering
\begin{tikzpicture}[>=Stealth,
    nd/.style={draw, rounded corners, font=\footnotesize, inner sep=3pt},
    ed/.style={fill=gray!15},
    rl/.style={font=\scriptsize, text=black!50, anchor=west, inner sep=1pt, xshift=1pt},
  ]
  \node[nd,ed] (L1) at (0,0) {$e(a,b)$};
  \node[nd] (L2) at (0,1) {$r(a,b)$};
  \node[rl] at (L2.east) {\eqref{rule:intro-bushy-base}};
  \node[nd,ed] (L3) at (1.8,0) {$e(b,c)$};
  \node[nd] (L4) at (1.8,1) {$r(b,c)$};
  \node[rl] at (L4.east) {\eqref{rule:intro-bushy-base}};
  \node[nd,ed] (L5) at (3.6,0) {$e(c,d)$};
  \node[nd] (L6) at (3.6,1) {$r(c,d)$};
  \node[rl] at (L6.east) {\eqref{rule:intro-bushy-base}};
  \node[nd] (L7) at (2.7,2) {$r(b,d)$};
  \node[rl] at (L7.east) {\eqref{rule:intro-bushy-step}};
  \node[nd] (L8) at (1.35,3) {$r(a,d)$};
  \node[rl] at (L8.east) {\eqref{rule:intro-bushy-step}};
  \draw[->] (L1) -- (L2);
  \draw[->] (L3) -- (L4);
  \draw[->] (L5) -- (L6);
  \draw[->] (L4) -- (L7);
  \draw[->] (L6) -- (L7);
  \draw[->] (L2) -- (L8);
  \draw[->] (L7) -- (L8);
\end{tikzpicture}%
\hfil
\begin{tikzpicture}[>=Stealth,
    nd/.style={draw, rounded corners, font=\footnotesize, inner sep=3pt},
    ed/.style={fill=gray!15},
    rl/.style={font=\scriptsize, text=black!50, anchor=west, inner sep=1pt, xshift=1pt},
  ]
  \node[nd,ed] (R1) at (0,0) {$e(a,b)$};
  \node[nd] (R2) at (0,1) {$r(a,b)$};
  \node[rl] at (R2.east) {\eqref{rule:intro-lin-base}};
  \node[nd,ed] (R3) at (2.0,1) {$e(b,c)$};
  \node[nd] (R4) at (1.0,2) {$r(a,c)$};
  \node[rl] at (R4.east) {\eqref{rule:intro-lin-step}};
  \node[nd,ed] (R5) at (3.0,2) {$e(c,d)$};
  \node[nd] (R6) at (2.0,3) {$r(a,d)$};
  \node[rl] at (R6.east) {\eqref{rule:intro-lin-step}};
  \draw[->] (R1) -- (R2);
  \draw[->] (R2) -- (R4);
  \draw[->] (R3) -- (R4);
  \draw[->] (R4) -- (R6);
  \draw[->] (R5) -- (R6);
\end{tikzpicture}
\caption{Proof trees for ontologies $\aprogram_1$ (left)
and $\aprogram_2$ (right) in Example~\ref{ex:transitivity-intro}. }
\label{fig:intro-transformation}
\end{figure}

\Needspace*{5\baselineskip}
\section{Preliminaries}\label{sec:preliminaries}

We recall the syntax of Datalog (with built-in relations)
and define its semantics based on proof trees.
A recent survey~\cite{Kroetzsch2025:Datalog} offers a broader treatment.

\subsubsection{Syntax}

We consider a vocabulary of mutually disjoint sets $\Plang$ of \newterm{predicates},
$\Clang$ of \newterm{constants}, and $\Vlang$ of \newterm{variables}.
We associate an \newterm{arity} $\arity(p) \geq 0$ with every predicate $p \in \Plang$.
A \newterm{term} is an element of the set $\Tlang \defeq \Clang \cup \Vlang$.
We use $\vec{t}$ to denote a list of terms $\tuple{t_1, \ldots, t_k}$
where $\sizeof{\vec{t}} = k$. 
Similarly, we write, e.g., $\vec{x}$ or $\vec{y}$ for lists of variables. 
An \newterm{atom} has the form $p(\vec{t})$ where $p \in \Plang$ and $\arity(p) = \sizeof{\vec{t}}$.
Atoms that do not contain any variables are called \newterm{facts}.
The sets of all facts and all atoms are denoted $\Flang$ and $\Alang$, respectively.
A \newterm{Datalog rule} $\arule$ is an expression of the form 
\begin{equation}
    H \leftarrow B_1, \ldots, B_n
\end{equation}
where $H$ and $B_1, \ldots, B_n$ are atoms.
We call $H$ the \newterm{head} and $B_1, \ldots, B_n$
the \newterm{body} of $\arule$.
We use notation $\rhead{\arule} \defeq H$ and
$\rbodyi{i}{\arule} \defeq B_i$ for $1 \leq i \leq n$,
and set $\sizeof{\arule} \defeq n$.  
We require that every variable in the head of $\arule$ 
also appears in its body (\newterm{safety}).
Rule bodies are allowed to be empty; in this case, the symbol $\leftarrow$ is omitted.
A \newterm{program} $\aprogram$ is a triple $\tuple{R, \Pinput, \Poutput}$
consisting of a finite set $R$ of rules,
and sets $\Pinput$ and $\Poutput$ of \newterm{input} and \newterm{output predicates},
respectively, such that input predicates may not appear in the head of a rule.

\subsubsection{Semantics and Proofs}

A (potentially infinite) set of facts $\Dnter$ is a \newterm{database}.
It is an \newterm{input database} for $\aprogram = \tuple{R, \Pinput, \Poutput}$
if the predicate of every fact is from $\Pinput$. 
We allow for infinite databases $\Dnter$ to express conceptually infinite built-in relations
such as $\leq$ and $\neq$. An \newterm{ontology} $\mathcal{O} = \tuple{\aprogram, \Dnter}$ consists of a program $\aprogram$ together with an input database $\Dnter$.

An \newterm{assignment} $\mu$ for a rule $\arule$ 
maps every variable $x$ in $\arule$ to a constant $\mu(x) \in \Clang$.
We extend $\mu$ to terms by setting $\mu(c) \defeq c$ for constants $c \in \Clang$,
to lists of terms by $\mu(\tuple{t_1, \ldots, t_k}) \defeq \tuple{\mu(t_1), \ldots, \mu(t_k)}$,
and to atoms by $\mu(p(\vec{t})) \defeq p(\mu(\vec{t}))$.

An \newterm{ordered tree} is a tuple $T = \tuple{V, \troot, \ch}$ 
where $V$ is a finite set of \newterm{nodes}, 
$\troot \in V$ is called the \newterm{root} node, 
and $\ch\colon V \to V^*$ maps each node $v$ to a finite sequence 
$\ch(v) = \tuple{v_1, \ldots, v_n}$ of distinct \newterm{children} of $v$, 
such that every node except $\troot$ appears in exactly one such sequence 
and $\troot$ appears in none.
We call $v$ a \newterm{leaf} if $\ch(v)$ is empty.
A \newterm{proof tree} for a fact $f$ over a program $\aprogram = \tuple{R, \Pinput, \Poutput}$
and a database $\Dnter$ is an ordered tree $T = \tuple{V, \troot, \ch}$ 
together with labelling functions $\tfact \colon V \to \Flang$ 
and $\trule \colon V \to R \cup \set{\mleaf}$, such that 
\begin{enumerate}
  \item $\tfact(\troot) = f$;
  \item if $\ch(v) = \tuple{v_1, \ldots, v_n}$ with $n \geq 1$, 
  then $\trule(v) \in R$ with $\sizeof{\trule(v)} = n$, 
  and there exists an assignment $\mu$ for $\trule(v)$ such that 
  $\mu(\rhead{\trule(v)}) = \tfact(v)$ and 
  $\mu(\rbodyi{i}{\trule(v)}) = \tfact(v_i)$ for $1 \leq i \leq n$; and
  \item for every leaf $v$, either:
  \begin{enumerate}
      \item $\trule(v) = \mleaf$ and $\tfact(v) \in \Dnter$, or  
      \item $\trule(v) \in R$ with $\sizeof{\trule(v)} = 0$ and $\tfact(v) = \rhead{\trule(v)}$.
  \end{enumerate}
\end{enumerate}
We say that a fact $f$ is \newterm{derived} from $\aprogram$ and $\Dnter$,
written $\aprogram, \Dnter \vdash f$,
if there exists a proof tree $T$ for $f$ over $\aprogram$ and $\Dnter$.
The \newterm{leaf database} $\leafdb{T}$ of a proof tree $T$ is the set of all
facts $\tfact(v)$ for leaves $v$ of $T$ with $\trule(v) = \mleaf$.
Every proof tree $T$ for $f$ over $\aprogram$ and $\Dnter$ is also a proof tree
for $f$ over $\aprogram$ and $\leafdb{T}$.
The \newterm{output} of $\aprogram$ over $\Dnter$ is the set
$\pout(\aprogram, \Dnter) \defeq \set{ f \mid \aprogram, \Dnter \vdash f \text{ and the predicate of } f \text{ is in } \Poutput }$.

\begin{remark}
For finite input databases, $\pout(\aprogram, \Dnter)$ coincides 
with the set of output facts in the least fixpoint of the standard 
immediate-consequence operator.
With infinite built-in relations, practical systems ensure termination by,
e.g., requiring that every variable in a built-in
also occurs in a body atom with a finite predicate.
Such considerations do not affect our results.
\end{remark}

\subsubsection{Containment}
Let $\aprogram_1 = \tuple{R_1, \Pinput, \Poutput}$ and 
$\aprogram_2 = \tuple{R_2, \Pinput, \Poutput}$ be programs
over the same input and output predicates.
We say that $\aprogram_1$ is \newterm{contained} in $\aprogram_2$,
written $\aprogram_1 \contained \aprogram_2$,
if $\pout(\aprogram_1, \Dnter) \subseteq \pout(\aprogram_2, \Dnter)$ 
for every input database $\Dnter$.
The two programs are \newterm{equivalent}, written $\aprogram_1 \equiv \aprogram_2$,
if $\aprogram_1 \contained \aprogram_2$ and $\aprogram_2 \contained \aprogram_1$.
A \newterm{program transformation} is a function $t$ mapping programs $\aprogram$ to programs $t(\aprogram)$.
It is \newterm{sound} if $t(\aprogram) \contained \aprogram$ 
and \newterm{complete} if $\aprogram \contained t(\aprogram)$ for all programs $\aprogram$.

\section{Proof Transformations}\label{sec:transformations}

A sound program transformation produces
a new program $\aprogram_1$ from an original program $\aprogram_2$,
such that $\aprogram_1 \contained \aprogram_2$.
This guarantees that for every proof tree $T_1$ over $\aprogram_1$ 
there exists a proof tree $T_2$ over $\aprogram_2$ deriving the same (output) fact
over the same input database.
The central question of this paper is how to obtain $T_2$ from $T_1$,
and which computational resources this requires.

In a proof tree, the same fact may be derived multiple times 
in independent subtrees.
A more compact representation merges such redundant derivations,
yielding a directed acyclic graph.
A \newterm{proof DAG} for a fact $f$ over a program $\aprogram$ and database $\Dnter$
is a tuple $G = \tuple{V, \troot, \ch, \tfact, \trule}$
satisfying the same conditions as a proof tree,
except that a node may appear as a child of multiple nodes.
Notions such as the leaf database $\leafdb{G}$ carry over from proof trees.
The \newterm{unfolding} of a proof DAG $G$ is the proof tree $\unfold(G)$
obtained by recursively duplicating shared nodes.
Conversely, a proof tree may be compressed into a proof DAG by merging nodes labelled with the 
same fact and the same rule and whose subgraphs are isomorphic.
For a fact $f = p(t_1, \ldots, t_k)$, we define $\sizeof{f} \defeq k$.
The size of a proof tree $T = \tuple{V, \troot, \ch, \tfact, \trule}$ is
$\sizeof{T} \defeq \sum_{v \in V} \sizeof{\tfact(v)}$.
The facts of a proof DAG $G$ are $\factsof{G} \defeq \bigcup_{v \in V} \tfact(v)$, and its size is $\sizeof{G} \defeq \sizeof{\set{\tuple{v, w} \in V \times V \mid v \in \ch(w)}} + \sum_{t \in \factsof{G}} \sizeof{t}$.

\begin{definition}[Proof transformation]\label{def:transformation}
  Let $\aprogram_1 = \tuple{R_1, \Pinput, \Poutput}$ and $\aprogram_2 = \tuple{R_2, \Pinput, \Poutput}$
  be programs with $\aprogram_1 \contained \aprogram_2$.
  A \newterm{proof transformation} from $\aprogram_1$ to $\aprogram_2$ 
  is a function $\tau$ that maps 
  every proof tree $T_1$ for a fact $f$ with predicate $p \in \Poutput$
  over $\aprogram_1$ and input database $\Dnter$ 
  to a proof DAG $\tau(T_1)$ for $f$ over $\aprogram_2$ and $\Dnter$.
  The \newterm{result} of this transformation is the proof tree $\unfold(\tau(T_1))$.
\end{definition}
We require that $\tau(T_{1})$ is a proof DAG instead of a proof tree 
because proof trees can be exponentially larger than the corresponding proof DAGs. 
The following example demonstrates such an unavoidable blowup.

\begin{example}\label{ex:dag-blowup}
Consider the following equivalent programs $\aprogram_1$ and $\aprogram_2$
with input predicates $\Pinput = \set{s, e}$ and output predicate $\Poutput = \set{r}$.

\medskip
\noindent
\begin{minipage}[t]{0.42\textwidth}
\textbf{Program $\aprogram_1$}:
\begin{align}
  r(x) &\leftarrow s(x) \label{rule:exp-1-base} \\
  r(y) &\leftarrow r(x), e(x, y) \label{rule:exp-1-step}
\end{align}
\end{minipage}%
\hfill
\begin{minipage}[t]{0.54\textwidth}
\textbf{Program $\aprogram_2$}:
\begin{align}
  r(x) &\leftarrow s(x) \label{rule:exp-2-base} \\
  r(y) &\leftarrow p(x), q(x), e(x, y) \label{rule:exp-2-step} \\
  p(x) &\leftarrow r(x) \label{rule:exp-2-p} \\
  q(x) &\leftarrow r(x) \label{rule:exp-2-q}
\end{align}
\end{minipage}

\medskip\noindent
Both programs derive $r(x)$ for exactly those $x$ 
reachable from some $s$-fact along $e$-edges.
However, $\aprogram_2$ duplicates each derivation of $r(x)$  
through the auxiliary predicates $p$ and $q$.
Consider $\Dnter = \set{s(c), e(c, b), e(b, a)}$.
A proof transformation $\tau$ from $\aprogram_1$ to $\aprogram_2$
must map nodes for rules \eqref{rule:exp-1-base} and \eqref{rule:exp-1-step}
to nodes for rules \eqref{rule:exp-2-base} and \eqref{rule:exp-2-step}, respectively,
and introduce two nodes for rules \eqref{rule:exp-2-p} and \eqref{rule:exp-2-q}
below each node for rule \eqref{rule:exp-2-step}.
Figure~\ref{fig:dag-blowup} shows the proof tree $T_{1}$ for $r(a)$ over $\aprogram_{1}$ and $\Dnter$,
and the proof DAG $\tau(T_{1})$ over $\aprogram_{2}$ and $\Dnter$.
The unfolding $\unfold(\tau(T_{1}))$ is exponential in the size of $T_{1}$.
\end{example}

\begin{figure}[t]
\centering
\begin{tikzpicture}[>=Stealth,
    nd/.style={draw, rounded corners, font=\footnotesize, inner sep=3pt},
    ed/.style={fill=gray!15},
    rl/.style={font=\scriptsize, text=black!50, anchor=west, inner sep=1pt, xshift=1pt},
  ]
  \node[nd,ed] (L1) at (0,0) {$s(c)$};
  \node[nd] (L2) at (0,0.8) {$r(c)$};
  \node[rl] at (L2.east) {\eqref{rule:exp-1-base}};
  \node[nd,ed] (L3) at (1.8,0.8) {$e(c,b)$};
  \node[nd] (L4) at (0.9,1.6) {$r(b)$};
  \node[rl] at (L4.east) {\eqref{rule:exp-1-step}};
  \node[nd,ed] (L5) at (2.7,1.6) {$e(b,a)$};
  \node[nd] (L6) at (1.8,2.4) {$r(a)$};
  \node[rl] at (L6.east) {\eqref{rule:exp-1-step}};
  \draw[->] (L1) -- (L2);
  \draw[->] (L2) -- (L4);
  \draw[->] (L3) -- (L4);
  \draw[->] (L4) -- (L6);
  \draw[->] (L5) -- (L6);
\end{tikzpicture}%
\hfil
\begin{tikzpicture}[>=Stealth,
    nd/.style={draw, rounded corners, font=\footnotesize, inner sep=3pt},
    ed/.style={fill=gray!15},
    rl/.style={font=\scriptsize, text=black!50, anchor=west, inner sep=1pt, xshift=1pt},
  ]
  \node[nd,ed] (N1) at (2,0) {$s(c)$};
  \node[nd] (N2) at (2,0.8) {$r(c)$};
  \node[rl] at (N2.east) {\eqref{rule:exp-2-base}};
  \node[nd] (N3) at (0.2,1.6) {$p(c)$};
  \node[rl] at (N3.east) {\eqref{rule:exp-2-p}};
  \node[nd] (N4) at (2,1.6) {$q(c)$};
  \node[rl] at (N4.east) {\eqref{rule:exp-2-q}};
  \node[nd,ed] (N5) at (3.8,1.6) {$e(c,b)$};
  \node[nd] (N6) at (2,2.4) {$r(b)$};
  \node[rl] at (N6.east) {\eqref{rule:exp-2-step}};
  \node[nd] (N7) at (0.2,3.2) {$p(b)$};
  \node[rl] at (N7.east) {\eqref{rule:exp-2-p}};
  \node[nd] (N8) at (2,3.2) {$q(b)$};
  \node[rl] at (N8.east) {\eqref{rule:exp-2-q}};
  \node[nd,ed] (N9) at (3.8,3.2) {$e(b,a)$};
  \node[nd] (N10) at (2,4) {$r(a)$};
  \node[rl] at (N10.east) {\eqref{rule:exp-2-step}};
  \draw[->] (N1) -- (N2);
  \draw[->] (N2) -- (N3);
  \draw[->] (N2) -- (N4);
  \draw[->] (N3) -- (N6);
  \draw[->] (N4) -- (N6);
  \draw[->] (N5) -- (N6);
  \draw[->] (N6) -- (N7);
  \draw[->] (N6) -- (N8);
  \draw[->] (N7) -- (N10);
  \draw[->] (N8) -- (N10);
  \draw[->] (N9) -- (N10);
\end{tikzpicture}
\caption{Proof tree $T_1$ for $\aprogram_1$ (left) 
and proof DAG $\tau(T_1)$ for $\aprogram_2$ (right)
of Example~\ref{ex:dag-blowup}}
\label{fig:dag-blowup}
\end{figure}

\section{Complexity of Proof Transformations}\label{sec:hardness}
We now focus on the computational problem of transforming 
a proof tree $T_1$ over $\aprogram_1$ 
into a proof DAG $T_2$ over $\aprogram_2$ for the same fact. 
A naive approach extracts the leaf database of $T_1$ and evaluates $\aprogram_2$ from scratch,
which takes time polynomial in the size of $T_1$.
In this section, we show that producing a valid output is \PTime-hard
if $T_1$ is encoded as a directed acyclic graph, 
matching the naive upper bound, 
and \LogCFL-hard if $T_1$ is encoded as a tree.

\subsection{The Transformation Problem}
A proof transformation (Definition~\ref{def:transformation})
assigns a fixed output proof to every input proof.
Here, instead, we study the computational complexity of constructing \emph{some} valid proof over $\aprogram_2$
from a proof over $\aprogram_1$ for the same fact, 
accepting any valid proof DAG as output.
Intuitively, this captures the complexity of the best possible transformation algorithm:
lower bounds apply to every proof transformation, 
while the upper bounds are established by a particular transformation algorithm.
Formally, we define the search problem as follows.

\smallskip
\noindent
\fbox{\parbox{0.96\textwidth}{%
    \textbf{Problem:} $\TransEnc(\aprogram_1, \aprogram_2)$ for programs $\aprogram_1 \contained \aprogram_2$

    \noindent\textbf{Input:}
    A proof tree (if $\enc = \enctree$)
    or proof DAG (if $\enc = \encdag$)
    $T_1$ for a fact $f$ whose predicate is in $\Poutput$,
    over $\aprogram_1$ and input database $\leafdb{T_1}$.

    \noindent\textbf{Output:}
    A proof DAG $T_2$ for $f$ over $\aprogram_2$ and $\leafdb{T_1}$.
  }}
\smallskip

Note that we fix the programs $\aprogram_1 \contained \aprogram_2$ in the problem definition,
so complexity is measured only in terms of the size of the input.
Consequently, algorithms for $\TransEnc(\aprogram_1, \aprogram_2)$
may exploit properties of the particular program pair.
We consider both encodings $\enc \in \set{\enctree, \encdag}$ of the input proof tree
(both given as adjacency lists),
because they may differ exponentially in size (see Example~\ref{ex:dag-blowup}).
This difference stems solely from sharing identical subderivations, 
making the tree encoding the more informative measure when sharing is limited.

A search problem associates every \newterm{instance} with a
(non-empty) set of \newterm{solutions}.  
An algorithm \newterm{solves} such a problem if, 
for every instance, it computes a valid solution.
To establish hardness, we use the following notion of reduction,
adapted from the standard treatment of multivalued function problems
in complexity theory~\cite{DBLP:journals/tcs/MegiddoP91}: 
An \NCone \newterm{reduction} from a decision problem $L$ to a search problem
$S$ is a pair of functions $g$ and $h$, 
computable by logspace-uniform \NCone circuit families~\cite{greenlaw1995limits}, 
such that for every instance $x$ of $L$: 
(1) $g(x)$ is an instance of $S$, and 
(2) for every solution $y$ of $g(x)$, we have $h(y) = 1$ iff $x \in L$.
Condition~(2) quantifies over all solutions since an algorithm solving
$S$ may output any of them.  A search problem $S$ is
$\mathcal{C}$-hard under \NCone reductions for a complexity
class $\mathcal{C}$ if every decision problem in $\mathcal{C}$ reduces
to $S$ via an \NCone reduction.
We only consider \NCone reductions here, since they are sufficiently weak to establish \LogSpace-hardness, yet expressive enough to admit \LogSpace-complete problems to reduce from.

\subsection{Complexity Results}
The naive approach gives us an upper bound for the transformation problem:
\begin{proposition}\label{prop:trans-upper}
  Let $\aprogram_1 \contained \aprogram_2$ be programs. 
  Then there exists a polynomial-time algorithm that solves $\TransEnc(\aprogram_1, \aprogram_2)$ 
  for both encodings $\enc \in \set{\enctree, \encdag}$.
\end{proposition}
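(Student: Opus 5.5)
The algorithm is the naive one sketched at the start of this section. We extract the leaf database $\leafdb{T_1}$, evaluate the fixed program $\aprogram_2$ on it bottom-up, and record for every newly derived fact one rule and one assignment that produced it. This yields a proof DAG whose size is polynomial in $\sizeof{T_1}$.

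We first compute $\Dnter \defeq \leafdb{T_1}$ by scanning all nodes $v$ of the input (tree or DAG) with $\trule(v) = \mleaf$. This takes linear time in either encoding, and $\sizeof{\Dnter}$ is bounded by the input size. Let $C$ be the set of constants occurring in $\Dnter$, together with the finitely many constants in the rules of the fixed program $\aprogram_2$. Since rules are safe, every derived fact uses only constants from $C$. Because $\aprogram_2$ is fixed, its maximal arity $a$ is a constant, so there are at most $|\Plang_{\aprogram_2}| \cdot |C|^a$ candidate facts, which is polynomial in the input.

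Next we run the standard semi-naive (or naive) fixpoint computation of $\aprogram_2$ over $\Dnter$. In each round and for each rule $\arule$ of $\aprogram_2$, we enumerate all assignments $\mu$ mapping the at most constantly many variables of $\arule$ into $C$. There are polynomially many such assignments. For each, we check whether every $\mu(\rbodyi{i}{\arule})$ is already derived or in $\Dnter$. If $\mu(\rhead{\arule})$ is new, we create a node labelled with that fact and $\arule$, whose children are the nodes already created for the body facts. Each fact in $\Dnter$ gets a node labelled $\mleaf$, and each body-less rule gets a leaf node labelled with that rule.

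There are polynomially many rounds, since each round adds a fact. Every node points only to nodes created earlier, so the structure is acyclic. Each node has at most $\max_{\arule}\sizeof{\arule}$ children, so the DAG has polynomial size. Conditions 2 and 3 of a proof tree hold by construction. The root is the node for $f$, and it is selected once $f$ has been derived.

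The one step needing an argument is that $f$ is actually derived. The input $T_1$ is a proof tree, or unfolds to one, for $f$ over $\aprogram_1$ and $\Dnter$, so $f \in \pout(\aprogram_1, \Dnter)$. By $\aprogram_1 \contained \aprogram_2$ we get $f \in \pout(\aprogram_2, \Dnter)$, so some proof tree of $\aprogram_2$ over $\Dnter$ exists for $f$.

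It remains to check that the fixpoint computation finds every fact that has a proof tree over $\Dnter$. We show this by induction on the height of a proof tree. Every fact in such a tree uses only constants from $C$, so every assignment used in it is among those enumerated. Hence the computation eventually derives $f$. This completeness step is the only non-routine part; the rest is bookkeeping.

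One point to note is the infinite built-in relations. Here $\Dnter = \leafdb{T_1}$ is finite, and the problem asks for a proof over exactly $\leafdb{T_1}$, so no infinite relations are involved.
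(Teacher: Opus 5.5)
Your proposal is correct and follows the paper's own proof. Like the paper, you evaluate $\aprogram_2$ bottom-up over $\leafdb{T_1}$, store one rule instance per derived fact, and use $\aprogram_1 \contained \aprogram_2$ (which applies because the problem guarantees that $\leafdb{T_1}$ is an input database) to ensure that $f$ is derived. You then read off the proof DAG, which you should restrict to the nodes reachable from the node for $f$, as the paper does implicitly with \enquote{starting from $f$}.
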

\begin{proof}
  Given $T_1$, we compute the least fixpoint of $\aprogram_2$ over
  $\leafdb{T_1}$ in polynomial time~\cite{Alice}, 
  storing for each fact the rule instance 
  that was used to derive it. 
  We get $f \in \pout(\aprogram_2, \leafdb{T_1})$ since $\aprogram_1 \contained \aprogram_2$.
  Then, the stored rule instances, starting from $f$, 
  form a proof DAG $T_2$ for $f$ over $\aprogram_2$ and $\leafdb{T_1}$. 
  \qed
\end{proof}
For the corresponding lower bound,
we reduce from the Circuit Value Problem, which is known
to be \PTime-complete for DAG-encoded circuits~\cite{greenlaw1995limits}.  
As an immediate corollary, we also obtain a \LogSpace-hardness for
$\TransTree(\aprogram_{1}, \aprogram_{2})$, as this problem 
is \LogSpace-complete when the circuit is given as a tree~\cite{BeaudryM95}.
We show a \LogCFL lower bound later.
\begin{theorem}\label{thm:hardness-dag}
  There is a pair of programs $\aprogram_{1} \contained \aprogram_{2}$ for which the problem
  $\TransDAG(\aprogram_{1}, \aprogram_{2})$ is \PTime-hard.
\end{theorem}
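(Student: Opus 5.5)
We reduce from the monotone Circuit Value Problem, which is \PTime-complete for DAG-encoded circuits~\cite{greenlaw1995limits}. The plan is to pick two programs that derive the same facts but prove them differently. In $\aprogram_1$, a proof of the output fact is a proof DAG that mentions every gate of the circuit, whatever its value. In $\aprogram_2$, a proof of the same fact must expose the gate values, for example by deriving $\mathit{true}(g)$ or $\mathit{false}(g)$ for the output gate. The function $h$ then reads the circuit's value off whichever $\aprogram_2$-proof is returned.

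Concretely, I would encode a circuit as input facts $\mathit{and}(g,g_1,g_2)$, $\mathit{or}(g,g_1,g_2)$, $\mathit{in1}(g)$, $\mathit{in0}(g)$ and $\mathit{out}(g)$. Program $\aprogram_2$ computes $\mathit{val}(g,1)$ and $\mathit{val}(g,0)$ by the standard monotone rules. For AND gates these are
\begin{align*}
  \mathit{val}(g,1) &\leftarrow \mathit{and}(g,g_1,g_2), \mathit{val}(g_1,1), \mathit{val}(g_2,1),\\
  \mathit{val}(g,0) &\leftarrow \mathit{and}(g,g_1,g_2), \mathit{val}(g_1,0),
\end{align*}
together with the symmetric rule using $\mathit{val}(g_2,0)$, the dual rules for OR gates, and the rules for input gates. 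It also has $\mathit{ok}(g) \leftarrow \mathit{out}(g), \mathit{val}(g,1)$ and $\mathit{ok}(g) \leftarrow \mathit{out}(g), \mathit{val}(g,0)$, with $\mathit{ok}$ the single output predicate. Program $\aprogram_1$ derives $\mathit{ok}$ value-agnostically. It uses $\mathit{wf}(g) \leftarrow \mathit{in1}(g)$, $\mathit{wf}(g) \leftarrow \mathit{in0}(g)$, $\mathit{wf}(g) \leftarrow \mathit{and}(g,g_1,g_2),\mathit{wf}(g_1),\mathit{wf}(g_2)$, the analogous rule for OR, and $\mathit{ok}(g) \leftarrow \mathit{out}(g),\mathit{wf}(g)$.

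Next I would establish containment $\aprogram_1 \contained \aprogram_2$. The key claim is that for every input database, $\mathit{wf}(g)$ implies $\mathit{val}(g,1)$ or $\mathit{val}(g,0)$ under $\aprogram_2$. This is proved by induction on the $\aprogram_1$-proof, checking each gate type. Containment has to hold for all databases, including non-circuits such as ones with cycles, gates labelled both $\mathit{in0}$ and $\mathit{in1}$, or several gate facts per name. The induction on proof trees handles all of these uniformly, since every well-formedness derivation is finite and bottom-up.

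For the reduction itself, $g$ maps a circuit to the proof DAG of $\mathit{ok}(g_{\mathit{out}})$ over $\aprogram_1$. This DAG has one $\mathit{wf}$ node per gate, whose children are the gate fact and the $\mathit{wf}$ nodes of its inputs. It mirrors the circuit DAG locally, so it is computable in \NCone (each node and edge depends on a single input fact). The function $h$ inspects the child of the root of the returned $\aprogram_2$-DAG, which is a node labelled $\mathit{val}(g_{\mathit{out}},b)$, and outputs $b$; this is also local. Correctness of $h$ needs soundness of $\aprogram_2$ on genuine circuits: $\mathit{val}(g,b)$ is derivable only if gate $g$ evaluates to $b$, again by induction. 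So every valid solution yields the correct value.

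The main obstacle is ensuring that the input actually is a legal $\aprogram_1$-proof and that $h$ reading a single node is \NCone-computable under the adjacency-list output encoding. Locating the root and its child requires care; I would fix $\mathit{ok}$ to have a unique root and let $h$ search the node list in parallel for a node labelled $\mathit{val}(g_{\mathit{out}},\cdot)$, an OR over polynomially many entries, which is in \NCone. The second concern is that the containment argument must survive arbitrary databases, which the inductive argument above addresses.
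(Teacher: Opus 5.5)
Your proposal is correct and follows essentially the same route as the paper. Both reduce from the Circuit Value Problem: $\aprogram_1$ checks only the circuit's structure without evaluating it (your $\mathit{wf}$, the paper's $\mathit{maybe}$), while $\aprogram_2$ evaluates the circuit and derives the output fact from either truth value of the output gate, so every $\aprogram_2$-proof reveals the circuit's value at its root. The differences are cosmetic: you use monotone AND/OR gates instead of AND/NOT, you read the $\mathit{val}$ child instead of the root's rule label, and you spell out the containment and soundness inductions that the paper leaves implicit.
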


\begin{proof}
  We define programs $\aprogram_1$ and $\aprogram_2$
  over the single output predicate $\predname{goal}$ and the set of 
  input predicates $\Pinput = \set{\prednamesub{t}{in}, \prednamesub{f}{in}, \predname{and}, \predname{not},     \predname{out}}$,   
  where we use facts $\predname{and}(g, g_1, g_2)$ and $\predname{not}(g, g_1)$ to encode the structure of the     circuit, 
  $\prednamesub{t}{in}(g)$ and $\prednamesub{f}{in}(g)$ to declare the input gate~$g$ as true or false, 
  and $\predname{out}(g)$ to mark the output gate.
  Program $\aprogram_2$ evaluates the circuit as expected:

  \noindent
  \begin{minipage}[b]{0.48\textwidth}%
    \begin{align}
      \label{rule:t-tin}
      \predname{true}(g) &\leftarrow \prednamesub{t}{in}(g) \\
      \label{rule:f-fin}
      \predname{false}(g) &\leftarrow \prednamesub{f}{in}(g) \\
      \begin{split}
        \label{rule:a-tt}
        \predname{true}(g) &\leftarrow \predname{and}(g,g_1,g_2), \\
        &\phantom{\leftarrow~{ }} \predname{true}(g_1),\, \predname{true}(g_2)
      \end{split} \\
      \begin{split}
        \label{rule:a-ft}
        \predname{false}(g) &\leftarrow \predname{and}(g,g_1,g_2), \\
        &\phantom{\leftarrow~{ }} \predname{false}(g_1),\, \predname{true}(g_2) 
      \end{split} \\
      \label{rule:g-f}
      \predname{goal} &\leftarrow \predname{false}(g),\, \predname{out}(g)
    \end{align}
  \end{minipage}%
  \hfill
  \begin{minipage}[b]{0.48\textwidth}
    \begin{align}
      \begin{split}
        \label{rule:a-tf}
        \predname{false}(g) &\leftarrow \predname{and}(g,g_1,g_2), \\
        &\phantom{\leftarrow~{ }} \predname{true}(g_1),\, \predname{false}(g_2)
      \end{split} \\
      \begin{split}
        \label{rule:a-ff}
        \predname{false}(g) &\leftarrow \predname{and}(g,g_1,g_2), \\
        &\phantom{\leftarrow~{ }} \predname{false}(g_1),\, \predname{false}(g_2)
      \end{split} \\
      \label{rule:n-t} 
      \predname{true}(g) &\leftarrow \predname{not}(g,g_1),\,
      \predname{false}(g_1) \\
      \label{rule:n-f}      
      \predname{false}(g) &\leftarrow \predname{not}(g,g_1),\,
      \predname{true}(g_1) \\
      \label{rule:g-t}
      \predname{goal} &\leftarrow \predname{true}(g),\, \predname{out}(g)
    \end{align}
  \end{minipage}
  \medskip
  
  \noindent Proof trees of $\aprogram_2$ correspond to evaluations of the circuit encoded by the input facts, 
  where the result of the output gate is obtained by checking the rule label at the root of the proof tree. 
  When constructing $\aprogram_1$, we need to ensure that
  $\predname{goal}$ is only derived when $\aprogram_2$ can derive a
  valuation for an output gate:

  \begin{align}
    \label{rule:m-tin}
    \predname{maybe}(g) &\leftarrow \prednamesub{t}{in}(g) \\
    \label{rule:m-fin}
    \predname{maybe}(g) &\leftarrow \prednamesub{f}{in}(g) \\
    \label{rule:m-and}         
    \predname{maybe}(g) &\leftarrow \predname{maybe}(g_1),
    \predname{maybe}(g_2), 
    \predname{and}(g, g_1, g_2) \\
    \label{rule:m-not}          
    \predname{maybe}(g) &\leftarrow \predname{not}(g, g_1), \predname{maybe}(g_1) \\
    \label{rule:m-goal}
    \predname{goal} &\leftarrow \predname{maybe}(g),\, \predname{out}(g) 
  \end{align}
  Proof trees of $\aprogram_1$ encode the structure of the circuit,
  but do not evaluate it. 
  We can thus obtain a suitable input DAG $T_{1}$ 
  from an arbitrary circuit by a simple relabelling that maps
  the gate types to the corresponding rules in $\aprogram_1$.
  \qed
\end{proof}
We now show the \LogCFL lower bound for $\TransTree(\aprogram_{1}, \aprogram_{2})$.
As in the proof of Theorem~\ref{thm:hardness-dag},
program $\aprogram_{1}$ merely validates the structure of the input,
while $\aprogram_{2}$ decides a concrete problem.
Here, we reduce from the word problem of a fixed context-free language.
\begin{theorem}\label{thm:hardness-tree}
  There is a pair of programs
  $\aprogram_{1} \contained \aprogram_{2}$ for which the problem
  $\TransTree(\aprogram_{1}, \aprogram_{2})$ is \LogCFL-hard.
\end{theorem}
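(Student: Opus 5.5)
We reduce from the membership problem of a fixed \LogCFL-complete context-free language $L$; Greibach's hardest context-free language works. Fix a grammar $\mathcal{G}$ for $L$ in Chomsky normal form. The word is encoded as a chain database: $\predname{first}(0)$, $\predname{last}(n)$, $\predname{next}(i,i+1)$ for $0\le i<n$, and $\predname{sym}_a(i,i+1)$ when the $(i{+}1)$-th letter is $a$.

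Program $\aprogram_2$ is the standard CYK-style Datalog encoding of $\mathcal{G}$:
\begin{itemize}
\item $A(x,y) \leftarrow \predname{sym}_a(x,y)$ for every production $A\to a$;
\item $A(x,z)\leftarrow B(x,y),C(y,z)$ for every production $A\to BC$;
\item $\predname{goal}\leftarrow \predname{first}(x),S(x,y),\predname{last}(y)$.
\end{itemize}
Then $\aprogram_2$ derives $\predname{goal}$ on the encoding of $w$ iff $w\in L$.

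Program $\aprogram_1$ must not depend on membership. Copying the idea of Theorem~\ref{thm:hardness-dag}, it only validates the structure. We add a unary predicate $\predname{reach}$ with rules $\predname{reach}(x)\leftarrow\predname{first}(x)$ and $\predname{reach}(y)\leftarrow\predname{reach}(x),\predname{next}(x,y)$. We then set $\predname{goal}\leftarrow\predname{reach}(x),\predname{last}(x)$, and also $\predname{goal}\leftarrow\predname{first}(x),S(x,y),\predname{last}(y)$ together with all rules of $\aprogram_2$, so that $\aprogram_1\contained\aprogram_2$ holds in the relevant direction.

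Containment needs care: $\aprogram_1$ derives $\predname{goal}$ on every chain, whereas $\aprogram_2$ derives it only for words in $L$. So $\aprogram_1\contained\aprogram_2$ fails unless $\aprogram_2$ also has a second way to derive $\predname{goal}$. We therefore also give $\aprogram_2$ the rules for $\predname{reach}$, but guarded by a complementary witness. Since CFLs are not closed under complement, we cannot evaluate both outcomes symmetrically the way the circuit proof did.

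The fix is to use a \emph{marker} input predicate, analogous to $\predname{out}$. We add $\predname{goal}\leftarrow\predname{reach}(x),\predname{last}(x),\predname{bad}(x)$ to both programs, and in $\aprogram_1$ the chain rule $\predname{goal}\leftarrow\predname{reach}(x),\predname{last}(x)$. This still breaks containment.

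A cleaner route keeps the two-outcome structure of the circuit proof by making $\aprogram_2$'s output rule label informative:
\begin{itemize}
\item $\aprogram_2$ has $\predname{goal}\leftarrow\predname{first}(x),S(x,y),\predname{last}(y)$ (rule $\rho_{\mathrm{yes}}$);
\item $\aprogram_2$ also has $\predname{goal}\leftarrow\predname{reach}(x),\predname{last}(x),\predname{fail}$ (rule $\rho_{\mathrm{no}}$), together with the $\predname{reach}$ rules;
\item $\aprogram_1$ uses $\predname{reach}$ only.
\end{itemize}
Here $\predname{fail}$ would have to be derivable exactly when $w\notin L$, which Datalog cannot express for a CFL. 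So the reduction instead exploits that the \emph{number} or \emph{shape} of solutions is irrelevant, and only $h$'s answer must be correct on every solution.

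Concretely, $\aprogram_2$ gets two routes to $\predname{goal}$: the parsing rule $\rho_{\mathrm{yes}}$, and a \enquote{trivial} route $\predname{goal}\leftarrow\predname{reach}(x),\predname{last}(x),\predname{end}(x)$. Here $\predname{end}$ is an extra input fact that $g$ places on the chain only in a padded copy. The plan is to let $g(w)$ build a tree over a modified word $w'=w\#u$, where $u$ is chosen so that, by a fixed grammar transformation, $w'\in L'$ always. Every derivation of $w'\in L'$ must then either split at $\#$ with $w\in L$, or use a fallback production whose use is detectable from the proof DAG. The function $h$ scans the output proof for the nonterminal fact $S(0,|w|)$, which is \NCone-checkable by searching the fact list.

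The main obstacle is designing $L'$ so that (i) $L'$ contains all padded words, which makes containment hold and the input tree valid, and (ii) every proof of $w'$ certifies whether $w\in L$ in an \NCone-readable way.

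The first thing I would try is the union $L'=\{w\#\mid w\in L\}\cup\{w\#\$\mid w\in\Sigma^*\}$, with $g$ emitting a database containing both $\#$ and a parallel $\$$-edge from $\#$'s end. $\aprogram_1$ uses only the $\$$-branch, while $\aprogram_2$ parses. This still lets a solver always pick the $\$$-branch, so it fails on its own. The real idea must force every valid proof to decide membership. I would therefore look for a grammar-based construction such as Sudborough's characterization of \LogCFL by auxiliary pushdown automata, encoding a complete run whose final state is readable from the root rule.

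Finally, the tree encoding of $T_1$ is computed in \NCone, since it is just a relabelled chain.
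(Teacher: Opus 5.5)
Your proposal does not yet contain a proof. Each concrete construction you write down either breaks $\aprogram_1 \contained \aprogram_2$ or lets a solver always take the uninformative branch, as you observe yourself. The closing plan via padded words or Sudborough's auxiliary pushdown automata specifies neither $\aprogram_2$ nor a containment argument.

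The step that derails you is the claim that a predicate derivable exactly when $w \notin L$ cannot be expressed in Datalog because context-free languages are not closed under complement. That conflates grammars with Datalog. On a chain encoding, the positions are linearly ordered by a successor relation and every edge carries exactly one letter, so positive Datalog is not confined to context-free languages. In particular, non-membership in a fixed CFL is expressible. This is the missing idea. With it, your ``cleaner route'' with two goal rules, the same two-outcome design as in Theorem~\ref{thm:hardness-dag}, goes through, and this is what the paper does.

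Concretely, keep your CYK rules and add, for every nonterminal $A$, a predicate $\bar{\predname{A}}$ for the spans that $A$ cannot produce, derived by induction on span length. The paper uses quadratic Greibach normal form, but your Chomsky normal form works the same way. For each production $p$, a predicate $\bar{\predname{p}}$ collects the spans on which $p$ fails:
\begin{itemize}
\item spans of the wrong length: longer than one letter for $A \to a$, exactly one letter for $A \to BC$;
\item single letters other than $a$, expressed positively via $\predname{sym}_b$ with $b \neq a$;
\item for $A \to BC$, spans on which every split point $s$ fails, i.e.\ $\bar{\predname{B}}(x,s)$ or $\bar{\predname{C}}(s,y)$ holds.
\end{itemize}
The universal quantifier over split points becomes an iterated conjunction: an auxiliary predicate sweeps along the successor relation from the first to the last split point. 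Then $\bar{\predname{A}}$ is the conjunction of $\bar{\predname{p}}$ over all productions $p$ of $A$, and $\aprogram_2$ gets the second goal rule $\predname{goal} \leftarrow \predname{first}(x), \bar{\predname{S}}(x,y), \predname{last}(y)$.

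Containment then holds on arbitrary databases. By monotonicity, along any first-to-last path used by $\aprogram_1$, one of $\predname{S}$ and $\bar{\predname{S}}$ is derivable for the word spelled by that path. On the encoding of $w$ exactly one of them is derivable, so $h$ just reads off the rule at the root of the output.

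For this to work, $\aprogram_1$ must accept only non-empty paths of letter-carrying edges. With your separate $\predname{next}$ predicate, a letterless $\predname{next}$-edge, or a constant that is both $\predname{first}$ and $\predname{last}$, lets $\aprogram_1$ derive $\predname{goal}$ while neither goal rule of $\aprogram_2$ fires. The paper avoids this by attaching the letter to the edge, $\predname{next}(x,\sigma,y)$, and building paths from at least one such edge.
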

\begin{proof}
  Let $L$ be a context-free language 
  with grammar $G = \tuple{V, \Sigma, P, S}$ and $\Sigma \subseteq \Clang$.
  For this reduction, 
  we pick $L$ to be a language whose word problem is \LogCFL-complete under \NCone reductions~\cite{DBLP:journals/jcss/LautemannMSV01}
  and assume $G$ to be given in quadratic Greibach normal form~\cite{DBLP:reference/hfl/AutebertBB97},
  i.e., each production $p \in P$ is of the form $A \to \sigma A_{1} \dotsb A_{n}$
  for $0 \leq n \leq 2$, a terminal symbol $\sigma \in \Sigma$, and nonterminal
  symbols $A, A_{1}, \dotsc, A_{n} \in V$.
  We use input predicates
  $\Pinput = \set{\predname{first}, \predname{next}, \predname{last}}$
  and the output predicate $\predname{goal}$, and encode a word
  $w = \sigma_{1}\dotsb \sigma_{\ell} \in \Sigma^{+}$ as the database
  \begin{align*}
    \Dnter_{w} \defeq \set{ \predname{first}(c_{1}),
    \predname{next}(c_{1}, \sigma_{1}, c_{2}), \dotsc,
    \predname{next}(c_{\ell}, \sigma_{\ell}, c_{\ell + 1}),
    \predname{last}(c_{\ell + 1}) },
  \end{align*}
  where $\predname{next}(x, \sigma, y)$ expresses that position $y$
  follows position $x$ when reading the letter $\sigma$.
  Program $\aprogram_{1}$ consists of the following rules, where rule~\eqref{eq:lcfl-p1-next} is instantiated for every $\sigma \in \Sigma$:
  \begin{align}
    \predname{r}(x, y) &\leftarrow \predname{next}(x, \sigma, y) \label{eq:lcfl-p1-next} \\
    \predname{r}(x, y) &\leftarrow \predname{r}(x, z), \predname{r}(z, y) \label{eq:lcfl-p1-transitive} \\
    \predname{goal} &\leftarrow \predname{r}(x, y), \predname{first}(x), \predname{last}(y) \label{eq:lcfl-p1-goal}
  \end{align}
  Program $\aprogram_2$ includes the rules of form \eqref{eq:lcfl-p1-next} and rule \eqref{eq:lcfl-p1-transitive}.
  For each nonterminal $A \in V$, 
  it uses binary predicates $\predname{A}$ and $\bar{\predname{A}}$,
  intended to hold on a span of the word iff $A$ can,
  respectively cannot, produce that span.
  Program $\aprogram_2$ further contains the following two rules,
  which can be used to decide if $w \in L$:
  \begin{align}
    \label{eq:lcfl-p2-goal} \predname{goal} &\leftarrow
    \predname{first}(x), \predname{last}(y), \predname{S}(x, y) \\
    \label{eq:lcfl-p2-goal-nonmatch} \predname{goal} &\leftarrow
    \predname{first}(x), \predname{last}(y), \bar{\predname{S}}(x, y)
  \end{align}
  The predicates $\predname{A}$ are derived by the rules given below,
  which are instantiated for each production $A \to \sigma A_{1} \dotsb A_{n} \in P$:
  \begin{align}
    \label{eq:lcfl-p2-production} \predname{A}(x, x_{n + 1}) &\leftarrow
    \predname{next}(x, \sigma, x_{1}), \predname{A}_{1}(x_{1}, x_{2}), \dotsc, \predname{A}_{n}(x_{n}, x_{n + 1})
  \end{align}
  Deriving the complements $\bar{\predname{A}}$ is more challenging,
  as $\aprogram_2$ must verify that \emph{no} production of $A$ can produce the span.
  For every production $p \in P$, we use binary predicate $\bar{\predname{p}}$
  that holds on every span that cannot be produced by $p$.
  Then, $\aprogram_2$ contains for each nonterminal $A \in V$ derived 
  by productions $\predname{p}^1, \ldots, \predname{p}^m \in P$:
  \begin{align}
    \label{eq:lcfl-p2-nonterminal-nonmatch}
    \bar{\predname{A}}(x, y) &\leftarrow
    \bar{\predname{p}}^{1}(x, y), \dotsc, \bar{\predname{p}}^{m}(x, y)
  \end{align}
  A production $p$ with leading terminal $\sigma$ fails on every span
  that does not start with $\sigma$.
  We derive these spans using an auxiliary predicate $\bar\sigma$:
  \begin{align}
    \label{eq:lcfl-p2-terminal-nonmatch} \bar{\sigma}(x, y) &\leftarrow
    \predname{next}(x, \sigma', y) \\
    \label{eq:lcfl-p2-production-terminal-nonmatch} \bar{\predname{p}}(x, y) &\leftarrow
    \bar{\sigma}(x, y) \\
    \label{eq:lcfl-p2-production-terminal-nonmatch-tail} \bar{\predname{p}}(x, y) &\leftarrow
    \bar{\sigma}(x, z), \predname{r}(z, y)
  \end{align}
  for all $\sigma \neq \sigma' \in \Sigma$ and productions $p \in P$ with lead terminal $\sigma$.
  The remaining causes of failure depend on the shape of $p$.
  A production $p = A \to \sigma$ additionally fails on every span of length at least two:
  \begin{align}
    \label{eq:lcfl-p2-production-nonterminal-long}
    \bar{\predname{p}}(x, y) &\leftarrow
    \predname{next}(x, t, z), \predname{r}(z, y)
  \end{align}
  A production $p = A \to \sigma A_{1}$ additionally fails on spans of
  length one, which are too short, and on spans whose remainder after
  $\sigma$ cannot be produced by $A_{1}$:
  \begin{align}
    \label{eq:lcfl-p2-production-nonterminal-short}
    \bar{\predname{p}}(x, y) &\leftarrow \predname{next}(x, t, y) \\
    \label{eq:lcfl-p2-production-nonterminal-nonmatch-1}
    \bar{\predname{p}}(x, y) &\leftarrow
    \predname{next}(x, \sigma, z), \bar{\predname{A}}_{1}(z, y)
  \end{align}
  Finally, consider a production $p = A \to \sigma A_{1} A_{2}$.
  Spans of length one or two are again too short:
  \begin{align}
    \bar{\predname{p}}(x, y) &\leftarrow \predname{next}(x, t, y)
    \label{eq:lcfl-p2-production-nonterminal-nonmatch-2-shortest} \\
    \bar{\predname{p}}(x, y) &\leftarrow
    \predname{next}(x, t, z), \predname{next}(z, t', y)
    \label{eq:lcfl-p2-production-nonterminal-nonmatch-2-short}
  \end{align}
  On longer spans,
  $p$ fails iff \emph{every} way of splitting the word between $A_{1}$ and $A_{2}$ fails.
  We capture this condition with two auxiliary predicates.
  The first predicate $\bar{\predname{p}}_{\predname{s}}(z, s, y)$ 
  states that the span from $z$ to $y$ cannot be split at position $s$,
  since $A_{1}$ fails on the part before $s$ or $A_{2}$ fails on the part after it.
  The second predicate $\bar{\predname{p}}_{\leq \predname{s}}(z, s, y)$ states that
  all split positions up to $s$ fail:
  \begin{align}
    \bar{\predname{p}}_{\predname{s}}(z, s, y) &\leftarrow
    \bar{\predname{A}}_{1}(z, s), \predname{r}(s, y)
    \label{eq:lcfl-p2-production-nonterminal-nonmatch-2-split-base-left} \\
    \bar{\predname{p}}_{\predname{s}}(z, s, y) &\leftarrow
    \predname{r}(z, s), \bar{\predname{A}}_{2}(s, y)
    \label{eq:lcfl-p2-production-nonterminal-nonmatch-2-split-base-right} \\
    \bar{\predname{p}}_{\leq \predname{s}}(z, s, y) &\leftarrow
    \predname{next}(z, t, s), \bar{\predname{p}}_{\predname{s}}(z, s, y)
    \label{eq:lcfl-p2-production-nonterminal-nonmatch-2-split-sweep-base} \\
    \bar{\predname{p}}_{\leq \predname{s}}(z, s', y) &\leftarrow
    \bar{\predname{p}}_{\leq \predname{s}}(z, s, y),
    \predname{next}(s, t, s'), \bar{\predname{p}}_{\predname{s}}(z, s', y)
    \label{eq:lcfl-p2-production-nonterminal-nonmatch-2-split-sweep-step} \\
    \bar{\predname{p}}(x, y) &\leftarrow
    \predname{next}(x, \sigma, z),
    \bar{\predname{p}}_{\leq \predname{s}}(z, s, y),
    \predname{next}(s, t, y)
    \label{eq:lcfl-p2-production-nonterminal-nonmatch-2-split}
  \end{align}

  For containment, consider an input database $\Dnter$ 
  with $\predname{goal} \in \pout(\aprogram_{1}, \Dnter)$.
  Then there are
  constants $c_{1}, \dotsc, c_{\ell + 1}$ and
  $s_{1}, \dotsc, s_{\ell}$ such that $\Dnter$ contains the facts
  $\predname{first}(c_{1})$, $\predname{last}(c_{\ell + 1})$, and
  $\predname{next}(c_{i}, s_{i}, c_{i + 1})$ for all
  $1 \leq i \leq \ell$, spelling out the word
  $w = s_{1}\dotsb s_{\ell}$.
  If $w \in L$, we can use rules of form \eqref{eq:lcfl-p2-production}
  to simulate productions of $G$,
  deriving $A(c_i, c_{j+1})$ if the nonterminal $A$ can produce the 
  span $s_i \dotsb s_j$.
  Hence, we have
  $\aprogram_{2}, \Dnter \vdash \predname{S}(c_{1}, c_{\ell + 1})$,
  and thus $\aprogram_{2}, \Dnter \vdash \predname{goal}$ by
  rule~\eqref{eq:lcfl-p2-goal}.

  For the case $w \notin L$ we show, 
  by induction on the length of a span $c_{i}, \dotsc, c_{j + 1}$, 
  that $\aprogram_{2}$ derives $\bar{\predname{A}}(c_{i}, c_{j + 1})$ 
  for every nonterminal $A \in V$ that does not produce $s_{i}\dotsb s_{j}$.
  It suffices by rule~\eqref{eq:lcfl-p2-nonterminal-nonmatch} to derive
  $\bar{\predname{p}}(c_{i}, c_{j + 1})$ for every production
  $p = A \to \sigma A_{1}\dotsb A_{n} \in P$ that cannot produce
  $s_{i}\dotsb s_{j}$.
  If $s_{i} \neq \sigma$, this follows from
  rules~\eqref{eq:lcfl-p2-production-terminal-nonmatch} and~\eqref{eq:lcfl-p2-production-terminal-nonmatch-tail}.
  Otherwise, we distinguish the three shapes of $p$.
  For $n = 0$, the production fails only if the span is longer than
  one letter, which is covered by
  rule~\eqref{eq:lcfl-p2-production-nonterminal-long}.
   For $n = 1$, it fails if the
  span consists of a single letter (rule~\eqref{eq:lcfl-p2-production-nonterminal-short}),
  or if $A_{1}$ does not produce the remainder $s_{i + 1}\dotsb s_{j}$,
  in which case the induction hypothesis yields
  $\bar{\predname{A}}_{1}(c_{i + 1}, c_{j + 1})$ and
  rule~\eqref{eq:lcfl-p2-production-nonterminal-nonmatch-1} applies.
  For $n = 2$, spans of one
  or two letters are too short (rules~\eqref{eq:lcfl-p2-production-nonterminal-nonmatch-2-shortest}
  and~\eqref{eq:lcfl-p2-production-nonterminal-nonmatch-2-short}); 
  on longer spans, every split of the remainder must fail. 
  By the induction hypothesis,
  rules~\eqref{eq:lcfl-p2-production-nonterminal-nonmatch-2-split-base-left}
  and~\eqref{eq:lcfl-p2-production-nonterminal-nonmatch-2-split-base-right} then derive
  $\bar{\predname{p}}_{\predname{s}}(c_{i + 1}, c_{k}, c_{j + 1})$ for
  every split position $c_{k}$, so that 
  rules~\eqref{eq:lcfl-p2-production-nonterminal-nonmatch-2-split-sweep-base}
  and~\eqref{eq:lcfl-p2-production-nonterminal-nonmatch-2-split-sweep-step} 
  derive $\bar{\predname{p}}_{\leq\predname{s}}(c_{i + 1}, c_j, c_{j + 1})$
  and therefore 
  rule~\eqref{eq:lcfl-p2-production-nonterminal-nonmatch-2-split} applies.
  We obtain
  $\aprogram_{2}, \Dnter \vdash \bar{\predname{S}}(c_{1}, c_{\ell + 1})$ for the full span,
  and hence $\predname{goal} \in \pout(\aprogram_{2}, \Dnter)$ by
  rule~\eqref{eq:lcfl-p2-goal-nonmatch}.

  On the encoding $\Dnter_{w}$ of a word $w$
  exactly one of $\predname{S}(c_{1}, c_{\ell + 1})$ and
  $\bar{\predname{S}}(c_{1}, c_{\ell + 1})$ is derivable. 
  Hence the rule at the root of every proof tree for $\predname{goal}$ over
  $\aprogram_{2}$ and $\Dnter_{w}$ is determined by whether $w \in L$.
  \qed
\end{proof}

\section{Proof Transformation Languages}\label{sec:languages}

We now turn to the question of \emph{how} to specify proof transformations 
and identify two practical transformation languages.
In Section~\ref{sec:homomorphism}, 
we show that the notion of uniform containment between programs admits local transformations 
defined by proof tree homomorphisms. 
We consider Monadic Second-Order Logic (MSO) interpretations 
as a transformation language in Section~\ref{sec:mso}, 
which define transformations computable in logarithmic space for tree-encoded inputs.

\subsection{Proof Tree Homomorphisms}\label{sec:homomorphism}

A simple class of proof transformation arises from \emph{uniform containment},
a notion introduced by Sagiv for minimising Datalog programs~\cite{Sagiv88}.
In this setting, an atom or a rule is
considered redundant if removing it from the program results in an equivalent program.
Checking containment between two Datalog programs is undecidable~\cite{Shmueli87},
so uniform containment can be used as a decidable alternative.

\begin{definition}[Uniform containment]\label{def:uniform}
  Let $\aprogram_1 = \tuple{R_1, \Pinput, \Poutput}$ 
  and $\aprogram_2 = \tuple{R_2, \Pinput, \Poutput}$ be programs
  over the same input and output predicates.
  Program $\aprogram_1$ is \newterm{uniformly contained} in $\aprogram_2$, denoted $\aprogram_1 \ucontained \aprogram_2$,
  if for every database $\Dnter$ (not only input databases for $\aprogram_1$ or $\aprogram_2$),
  we have $\pout(\aprogram_1, \Dnter) \subseteq \pout(\aprogram_2, \Dnter)$.
\end{definition}
Note that containment compares the output of the programs on input databases,
while for uniform containment $\pout(\aprogram_1, \Dnter) \subseteq \pout(\aprogram_2, \Dnter)$ 
must hold over all databases $\Dnter$.
Hence, $\aprogram_1 \ucontained \aprogram_2$ implies $\aprogram_1 \contained \aprogram_2$,
but the converse does not hold in general.
Let $\omega \colon \Vlang \to \Clang$ be an injective mapping from variables to constants.
We can decide whether $\aprogram_1 \ucontained \aprogram_2$ by verifying for every rule $\rho_1 \in R_1$,
that its head can be derived from its body using $\aprogram_2$, 
i.e., that $\aprogram_2, \Dnter \vdash \omega(\rhead{\rho_1})$ 
for $\Dnter = \set{\omega(\rbodyi{i}{\rho_1}) \mid 1 \leq i \leq |\rho_1|}$~\cite{Sagiv88}.
This leads to a class of proof transformations 
that replace each node of a proof tree for $\aprogram_1$ with the matching derivation of the same fact in $\aprogram_2$.
This corresponds naturally to the notion of tree homomorphisms~\cite{tree1984},
which we adapt for proof trees as follows.

\begin{definition}[Proof tree template]\label{def:proof-tree-template}
  Let $\aprogram = \tuple{R, \Pinput, \Poutput}$ be a program and $\rho$ be a rule.
  A \newterm{proof tree template} $T$ for $\rho$ over $\aprogram$
  is a tuple $\tuple{V, \troot, \ch, \tatom, \trule}$ 
  where $\tuple{V, \troot, \ch}$ is an ordered tree
  with labelling functions $\tatom \colon V \to \Alang$ 
  and $\trule \colon V \to R$, such that
  \begin{enumerate}
    \item $\tatom(\troot) = \rhead{\rho}$;
    \item if $\ch(v) = \tuple{v_1, \ldots, v_n}$ with $n \geq 1$, 
          then $\trule(v) \in R$ with $\sizeof{\trule(v)} = n$,
          and there exists a substitution 
          $\theta_v \colon \Vlang \to \Tlang$ such that
          $\theta_v(\rhead{\trule(v)}) = \tatom(v)$ 
          and $\theta_v(\rbodyi{i}{\trule(v)}) = \tatom(v_i)$ 
          for $1 \leq i \leq n$; and
    \item for every leaf node $\ell$, there is an index $\rho_\ell$ 
          with $1 \leq \rho_\ell \leq \sizeof{\rho}$
          such that $\tatom(\ell) = \rbodyi{\rho_\ell}{\rho}$.
  \end{enumerate}
\end{definition}
Let $T$ be a proof tree template for a rule $\rho$ with $\sizeof{\rho} = n$
over $\aprogram$ and $L \subseteq V$
the set of leaves of $T$.
Furthermore, let $T_1, \ldots, T_n$ be proof DAGs over $\aprogram$
with $T_i = \tuple{V_i, \troot_i, \ch_i, \tfact_i, \trule_i}$
that do not share nodes with $T$ or with each other.
If there is a substitution $\sigma \colon \Vlang \to \Clang$
with $\sigma(\rbodyi{i}{\rho}) = \tfact(\troot_i)$ for $1 \leq i \leq n$,
we write $T[T_1, \ldots, T_n]$ to denote the proof DAG $S$
that results from uniformly replacing each node
in $\tuple{V', \troot', \ch', \tfact', \trule'}$ with a fresh copy, where
\begin{itemize}
    \item $V' \defeq (V \cup \bigcup_{i = 1}^n V_i) \setminus L$;
    \item $\troot' \defeq \troot$;
    \item $\ch' \defeq \ch \cup \bigcup_{i = 1}^n \ch_i$ replacing every occurrence of a leaf node $\ell \in L$ by $\troot_{\rho_\ell}$;
    \item $\tfact'(v) \defeq \sigma(\tatom(v))$ for $v \in V$ and $\tfact'(v) \defeq \tfact_i(v)$ for $v \in V_i$; and
    \item $\trule'(v) \defeq \trule(v)$ for $v \in V$ and $\trule'(v) \defeq \trule_i(v)$ for $v \in V_i$.
\end{itemize}
Further, we use $T_{\mid v}$ to denote the subgraph of a proof DAG $T$ rooted at $v$.
We can now define proof tree homomorphisms as follows.
\begin{definition}[Proof tree homomorphism]\label{def:proof-tree-homomorphism}
    Let $\aprogram_1 = \tuple{R_1, \Pinput, \Poutput}$ 
    and $\aprogram_2 = \tuple{R_2, \Pinput, \Poutput}$ 
    be programs with $\aprogram_1 \contained \aprogram_2$,
    and let $\set{S_\rho}_{\rho \in R_1}$ 
    be a family of proof tree templates
    where $S_\rho$ is a proof tree template for $\rho \in R_1$ over $\aprogram_2$.
    The \newterm{proof tree homomorphism}
    induced by $\set{S_\rho}_{\rho \in R_1}$ 
    is a proof transformation $\tau$ defined as follows.
    Given a proof tree $T = \tuple{V, \troot, \ch, \tfact, \trule}$ 
    over $\aprogram_1$ and some database $\Dnter$, we obtain $\tau(T)$ recursively:
    \begin{enumerate}
        \item if $v$ is a leaf node, then $\tau(T_{\mid v}) \defeq T_{\mid v}$; 
        \item otherwise, if $\ch(v) = \tuple{v_1, \ldots, v_n}$ and $\trule(v) = \rho$, then
        \begin{equation*}
            \tau(T_{\mid v}) \defeq S_\rho[\tau(T_{\mid v_1}), \ldots, \tau(T_{\mid v_n})]
        \end{equation*}
    \end{enumerate}
\end{definition}
From the above characterisation of uniform containment, we immediately obtain: 
\begin{theorem}
    Let $\aprogram_1$ and $\aprogram_2$ be programs.
    Then there exists a proof tree homomorphism from $\aprogram_1$ to $\aprogram_2$
    if and only if $\aprogram_1 \ucontained \aprogram_2$.
\end{theorem}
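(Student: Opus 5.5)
Both directions go through Sagiv's characterisation recalled above: $\aprogram_1 \ucontained \aprogram_2$ iff for every rule $\rho_1 \in R_1$ we have $\aprogram_2, \Dnter_{\rho_1} \vdash \omega(\rhead{\rho_1})$, where $\Dnter_{\rho_1} = \set{\omega(\rbodyi{i}{\rho_1}) \mid 1 \leq i \leq \sizeof{\rho_1}}$ is the frozen body. Proof tree templates for $\rho_1$ over $\aprogram_2$ are exactly such proof trees with $\omega$ undone. So the plan is to build a bijection-like translation between templates and proof trees over frozen bodies, and to check that the homomorphism of Definition~\ref{def:proof-tree-homomorphism} is well defined.

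For ($\Leftarrow$), assume $\aprogram_1 \ucontained \aprogram_2$ and fix $\rho \in R_1$. Take a proof tree $T$ for $\omega(\rhead{\rho})$ over $\aprogram_2$ and $\Dnter_\rho$. First I will normalise $T$ so that every leaf carries $\mleaf$, by continuing derivations or replacing leaves. The problematic cases are leaves labelled by body-less rules, or $\Dnter_\rho$ facts whose predicate is not in $\Pinput$; the latter is harmless since uniform containment ranges over all databases. Body-less rules are ground by safety. I would treat a ground fact rule as a leaf-free internal node, assuming the template condition on leaves is meant for $\mleaf$-leaves; otherwise these nodes are simply kept as they are. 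Then I apply $\omega^{-1}$ to every fact label. This is sound because $\omega$ is injective, and constants of $\rho$ are handled by choosing the range of $\omega$ disjoint from all constants occurring in the programs. The result gives atoms $\tatom(v)$ with $\tatom(\troot)=\rhead{\rho}$ and leaves equal to body atoms $\rbodyi{\rho_\ell}{\rho}$. For internal nodes, the assignment $\mu_v$ composed with $\omega^{-1}$ yields the substitution $\theta_v$. This gives $S_\rho$.

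Next I must show that the induced $\tau$ is a proof transformation, i.e.\ that $\tau(T)$ is a proof DAG for the same fact over $\aprogram_2$ and $\Dnter$. The argument is by induction on $T$. At a node $v$ with rule $\rho$ and assignment $\mu$, the substitution $\sigma=\mu$ satisfies $\sigma(\rbodyi{i}{\rho}) = \tfact(v_i) = $ the root fact of $\tau(T_{\mid v_i})$, using the induction hypothesis. Applying $\sigma$ to the template turns every $\theta_u$ into the assignment $\sigma\circ\theta_u$. Variables in the template not occurring in $\rho$ are the main obstacle, since they may appear in internal atoms (existentially chosen intermediate values). $\sigma$ must therefore be extended to them, e.g.\ by mapping them to the corresponding $\omega$-constants. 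Since the template came from a proof over a database using only frozen body atoms, such variables can in fact only arise from $\omega$ of body variables, so they do not occur; I will verify this carefully. Leaf nodes of $T$ are unchanged, and the root fact is $\sigma(\rhead{\rho}) = \tfact(v)$.

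For ($\Rightarrow$), suppose templates $S_\rho$ exist for all $\rho\in R_1$. Given any database $\Dnter$ and $f\in\pout(\aprogram_1,\Dnter)$ with proof tree $T$, the same induction shows that $\tau(T)$ is a proof DAG for $f$ over $\aprogram_2$. The induction uses only the templates, not containment, and works for arbitrary $\Dnter$. Unfolding then gives $f\in\pout(\aprogram_2,\Dnter)$, hence $\aprogram_1\ucontained\aprogram_2$. Alternatively, one can instantiate each template with $\omega$ and invoke Sagiv's criterion directly.
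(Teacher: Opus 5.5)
Your route is the paper's. The paper justifies the theorem only by calling it immediate from Sagiv's criterion, and your translation between templates and proof trees over the frozen body (fresh $\omega$, $\theta_v = \omega^{-1}\circ\mu_v$, and $\sigma\circ\theta_u$ as the assignments after plugging in subproofs) is exactly what ``immediately'' hides.

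There is nevertheless a genuine hole in ($\Leftarrow$). You inherit it from the paper's recalled characterisation rather than introduce it yourself. Sagiv's criterion characterises containment of \emph{all} derived facts, whereas the paper's definition of $\ucontained$ only compares $\pout$. For a rule $\rho \in R_1$ whose head predicate is not in $\Poutput$, nothing then yields a proof tree for $\omega(\rhead{\rho})$ over $\aprogram_2$ and $\Dnter_\rho$, and the claim is in fact false there. Take $\Pinput = \set{e}$, $\Poutput = \set{p}$, $R_1 = \set{p(x) \leftarrow e(x),\ q(x) \leftarrow e(x)}$ and $R_2 = \set{p(x) \leftarrow e(x)}$. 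Then $\pout(\aprogram_1, \Dnter) = \pout(\aprogram_2, \Dnter)$ for every database $\Dnter$. Yet there is no template for $q(x) \leftarrow e(x)$ over $\aprogram_2$, because its root must carry $q(x)$, which is neither a body atom of that rule nor an instance of a rule head in $R_2$.

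So you must either read $\ucontained$ in Sagiv's sense ($\aprogram_1, \Dnter \vdash f$ implies $\aprogram_2, \Dnter \vdash f$ for every database $\Dnter$ and every fact $f$), or assume that every head predicate of $\aprogram_1$ is an output predicate. Under that reading your ($\Leftarrow$) argument goes through, modulo the fact-rule issue below. The ($\Rightarrow$) direction is unaffected and even yields the stronger notion.

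\textbf{Extra variables.} The problem you call the main obstacle vanishes for \emph{every} template, not only for those obtained from frozen proofs. Each variable of a rule occurs in its body (safety), so a bottom-up induction shows that every term of every template atom occurs in a body atom of $\rho$ or in a rule of $\aprogram_2$. Your ($\Rightarrow$) induction needs exactly this: there the templates are arbitrary, so your justification via frozen body atoms is unavailable.

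\textbf{Fact rules.} Your fact-rule repair has to be carried through. Under the literal definitions, a template can never use a fact rule, and no template exists for a fact rule of $R_1$. So even $R_1 = R_2 = \set{p(a)}$ admits no homomorphism. If you relax templates to allow fact-rule leaves, you must also let the homomorphism replace fact-rule leaves of the input tree by their templates. Keeping them unchanged, as the definition and your sketch do, leaves rules of $R_1$ in $\tau(T)$.
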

\begin{example}\label{ex:uniform}
    Recall the programs $\aprogram_1$ (bushy) and $\aprogram_2$ (left-linear) 
    from Example~\ref{ex:transitivity-intro}.
    We have $\aprogram_2 \ucontained \aprogram_1$,
    so we can obtain a proof tree homomorphism from $\aprogram_2$ to $\aprogram_1$.
    For the base rule~\eqref{rule:intro-lin-base}, 
    the template $T_{\eqref{rule:intro-lin-base}}$ applies the corresponding base 
    rule~\eqref{rule:intro-bushy-base} of $\aprogram_1$ directly:
    \begin{center}
    \begin{tikzpicture}[>=Stealth,
        nd/.style={draw, rounded corners, font=\footnotesize, inner sep=3pt},
        lf/.style={draw, rounded corners, font=\footnotesize, inner sep=3pt, dashed},
        rl/.style={font=\scriptsize, text=black!50, anchor=west, inner sep=1pt, xshift=1pt},
      ]
      \node[lf] (L1) at (0,0) {$e(x,y)$};
      \node[nd] (L2) at (0,1) {$r(x,y)$};
      \node[rl] at (L2.east) {\eqref{rule:intro-bushy-base}};
      \draw[->] (L1) -- (L2);
    \end{tikzpicture}
    \end{center}
    For the recursive rule~\eqref{rule:intro-lin-step},
    the template $T_{\eqref{rule:intro-lin-step}}$ uses the bushy 
    rule~\eqref{rule:intro-bushy-step} at the root
    and derives $r(z,y)$ from the second body atom $e(z,y)$ 
    via~\eqref{rule:intro-bushy-base}:
    \begin{center}
    \begin{tikzpicture}[>=Stealth,
        nd/.style={draw, rounded corners, font=\footnotesize, inner sep=3pt},
        lf/.style={draw, rounded corners, font=\footnotesize, inner sep=3pt, dashed},
        rl/.style={font=\scriptsize, text=black!50, anchor=west, inner sep=1pt, xshift=1pt},
      ]
      \node[lf] (L1) at (0,0) {$r(x,z)$};
      \node[lf] (L2) at (2,0) {$e(z,y)$};
      \node[nd] (L3) at (2,1) {$r(z,y)$};
      \node[rl] at (L3.east) {\eqref{rule:intro-bushy-base}};
      \node[nd] (L4) at (1,2) {$r(x,y)$};
      \node[rl] at (L4.east) {\eqref{rule:intro-bushy-step}};
      \draw[->] (L2) -- (L3);
      \draw[->] (L1) -- (L4);
      \draw[->] (L3) -- (L4);
    \end{tikzpicture}
    \end{center}
\end{example}
\fullorconf{Appendix~\ref{appendix:uniform}}{The technical report} contains further examples of proof tree homomorphisms.

\subsection{Transformations Based on Monadic Second-Order Logic}\label{sec:mso}

Proof tree homomorphisms are restricted to local transformations
and cannot consider the global structure of the input tree.
For instance, the reverse direction of Example~\ref{ex:uniform},
transforming bushy proof trees of $\aprogram_1$ 
into left-linear proof trees of $\aprogram_2$,
cannot be realised by proof tree homomorphisms.
MSO transductions~\cite{Courcelle94,EngelfrietM99} 
are a well-studied formalism for tree-to-tree transformations 
that can express such global restructurings.
However, the output of MSO transductions is at most linear in the input,
while proof transformations 
may require polynomial growth:

\begin{example}\label{ex:non-linear-growth}
  Consider the following programs $\aprogram_1$ and $\aprogram_2$:

  \medskip
  \noindent
  \begin{minipage}[t]{0.36\textwidth}
  \textbf{Program $\aprogram_1$:}
  \begin{align}
    r(x) &\leftarrow s(x) \label{rule:growth-1-base} \\
    r(y) &\leftarrow r(x), n(x, y) \label{rule:growth-1-step} \\
    \predname{goal} &\leftarrow r(x), e(x) \label{rule:growth-1-goal}
  \end{align}
  \end{minipage}%
  \hfill
  \begin{minipage}[t]{0.56\textwidth}
  \textbf{Program $\aprogram_2$:}
  \begin{align}
    r(x, x) &\leftarrow s(x) \label{rule:growth-2-base} \\
    r(x, y') &\leftarrow r(x, y), n(y, y') \label{rule:growth-2-step-r} \\
      r(x', y') &\leftarrow r(x, y), n(x, x'), s(y'), e(y) \label{rule:growth-2-step-l} \\
    \predname{goal} &\leftarrow r(x, y), e(x), e(y) \label{rule:growth-2-goal}
  \end{align}
  \end{minipage}
  
  \medskip
  \noindent Program $\aprogram_1$ derives $\predname{goal}$ if there is a path of $n$-edges from some $s$-fact to an $e$-fact.
  Program $\aprogram_2$ implements a counter, 
  traversing all possible pairs of values along the $n$-chain.
  Hence, proof trees for $\predname{goal}$ over $\aprogram_1$
  are of size linear in the input,
  while proof trees for $\predname{goal}$ over $\aprogram_2$
  are of size quadratic in the input.
\end{example}
We accommodate for such polynomial growth of degree $k$ by using \newterm{MSO interpretations}~\cite{Bojanczyk22},
a generalisation of MSO transductions 
in which output elements are defined as $k$-tuples of input elements.

\subsubsection{Encoding proof trees as logical structures}
The goal is to define proof transformations 
as MSO formulae.
With each proof tree, we therefore associate a relational structure
consisting of the following relations over a universe of tree nodes and constants:
\begin{itemize}
  \item $\msochild_i(x, y)$ if $x$ is the  $i$th child of $y$;
  \item $\msorule_\rho(x)$ if rule $\rho$ is applied to obtain $x$; and
  \item $\msoarg_j(x, d)$ if the $j$th argument of the fact at $x$ is $d$.
\end{itemize}
We can now express proof trees as databases over this signature.
\begin{definition}\label{def:associated-interpretation}
  Let $T = \tuple{V, \troot, \ch, \tfact, \trule}$ be a proof tree 
  over a program $\aprogram = \tuple{R, \Pinput, \Poutput}$
  and some database $\Dnter$.
  The \newterm{associated interpretation} of $T$ is the database $\msotree(T)$,
  which contains exactly the following facts:
  \begin{itemize}
    \item $\msochild_i(v, w)$ for every $v, w \in V$ such that $\ch(w) = \tuple{v_1, \ldots, v_n}$ with $v = v_i$;
    \item $\msorule_\rho(v)$ for every $v \in V$ and $\rho \in R$ such that $\trule(v) = \rho$; and
    \item $\msoarg_j(v, d)$ for every $v \in V$ such that $\tfact(v) = p(d_1, \ldots, d_m)$ with $d = d_j$.
  \end{itemize} 
\end{definition}

\subsubsection{MSO transformations}

We can now use MSO formulae to specify a proof transformation
by defining the output structure over the nodes of the input proof tree.
For an introduction to Monadic Second-Order Logic, we refer to Libkin~\cite{Libkin04}.
The following definition for MSO interpretations is adapted to our setting from Gallot~\etal~\cite{GallotLN25}.

\begin{definition}[MSO interpretation]\label{def:mso-interpretation}
  Let $\aprogram = \tuple{R, \Pinput, \Poutput}$ be a program.
  Let $n$ be the maximum arity of any rule in $R$
  and let $m$ be the maximum arity of any atom occurring in $R$.
  A \newterm{$k$-dimensional MSO interpretation} for $\aprogram$ with $k \geq 1$
  consists of 
  \begin{itemize}
    \item a finite set $\mathcal{C}$ of components,
    \item for each component $\alpha \in \mathcal{C}$, an MSO formula $\varphi_\text{dom}^\alpha(\vec{x})$ with $k$ free variables,
    \item for each $\alpha \in \mathcal{C}$ and $\rho \in R$,
          an MSO formula $\varphi_\rho^\alpha(\vec{x})$ with $k$ free variables,
    \item for each $\alpha, \beta \in \mathcal{C}$ and $1 \leq i \leq n$, 
          an MSO formula  $\varphi_{\text{child}}^{\alpha, \beta, i}(\vec{x}, \vec{y})$  
          with $2k$ free variables, and
    \item for each $\alpha \in \mathcal{C}$, and $1 \leq j \leq m$, 
          an MSO formula $\varphi_{\text{arg}}^{\alpha, j}(\vec{x}, d)$ 
          with $k + 1$ free variables.
  \end{itemize}
\end{definition}
Conceptually, an MSO interpretation defines an output tree 
whose nodes are tuples of input nodes selected by $\varphi_\text{dom}^\alpha(\vec{x})$
for each component $\alpha \in \mathcal{C}$. 
The remaining formulae define the labellings of children, rules, and facts.
We write $\Dnter, \vec{v} \models \varphi(\vec{x})$ 
to denote that the MSO formula $\varphi$ 
is satisfied in the database $\Dnter$ 
when the free variables $\vec{x}$ are assigned to the elements $\vec{v}$.

\begin{definition}[MSO proof transformation]\label{def:mso-proof-tree-transformation}
    Let $\aprogram_1 = \tuple{R_1, \Pinput, \Poutput}$ and 
    $\aprogram_2 = \tuple{R_2, \Pinput, \Poutput}$ be programs
    with $\aprogram_1 \contained \aprogram_2$.
    A proof transformation $\tau$ from $\aprogram_1$ to $\aprogram_2$
    is an \newterm{MSO proof transformation} 
    if there is a $k$-dimensional MSO interpretation for $\aprogram_2$
    and some $k \geq 1$ and some set of components $\mathcal{C}$
    such that for every proof tree $T_1 = \tuple{V_1, \troot_1, \ch_1, \tfact_1, \trule_1}$
    for some fact $f$ with predicate $p \in \Poutput$
    over $\aprogram_1$ and database $\Dnter$, there is a proof DAG 
    $G_2 = \tau(T_1) = \tuple{V_2, \troot_2, \ch_2, \tfact_2, \trule_2}$
    for $f$ over $\aprogram_2$ and $\Dnter$ where
    \begin{itemize}
      \item the nodes of $G_2$ are pairs $\tuple{\alpha, \vec{v}} \in \mathcal{C} \times V_1^k$ 
            with $\msotree(T_1), \vec{v} \models \varphi_{\text{dom}}^\alpha(\vec{x})$;
      \item the rule at node $\tuple{\alpha, \vec{v}}$ is the unique $\rho \in R_2$
            such that $\msotree(T_1), \vec{v} \models \varphi_\rho^\alpha(\vec{x})$;
      \item node $\tuple{\beta, \vec{w}}$ is the $i$-th child of $\tuple{\alpha, \vec{v}}$ 
            iff $\msotree(T_1), \vec{v}, \vec{w} \models \varphi_{\text{child}}^{\alpha, \beta, i}(\vec{x}, \vec{y})$; and
      \item the $j$-th argument of the fact at $\tuple{\alpha, \vec{v}}$ is $d$ 
            iff $\msotree(T_1), \vec{v}, d \models \varphi_{\text{arg}}^{\alpha, j}(\vec{x}, d)$.
    \end{itemize}
\end{definition}

\begin{example}\label{ex:mso-transitivity}
  The proof transformation from bushy trees to left-linear trees 
  from Example~\ref{ex:transitivity-intro} can be described 
  as a 1-dimensional MSO proof transformation with components $\mathcal{C} = \set{\alpha}$ as follows.

  We assume a relation $v \leq w$ that holds whenever $v = w$ or $w$ is an ancestor of $v$
  and further define $\predname{Leaf}(v) \defeq \neg \exists u.\; \msochild(u, v)$
  and $\predname{Root}(v) \defeq \neg \exists w.\; \msochild(v, w)$
  where $\msochild(u, v) \defeq \bigvee_i \msochild_i(u,v)$.
  We then order the nodes above the leaves (derived by rule~\eqref{rule:intro-bushy-base}) 
  of the input tree from left to right:
  \begin{align}
    \begin{split}
      {u <_\ell v} \defeq{}
          &\msorule_\eqref{rule:intro-bushy-base}(u) \land 
          \msorule_\eqref{rule:intro-bushy-base}(v) \\
          &\land \exists p, a, b.\; \msochild_1(a, p) \land \msochild_2(b, p) \land
          u \leq a \land v \leq b
    \end{split}
  \end{align}
  We can check whether two nodes $u$ and $v$ are next to each other in the order
  \begin{align}
      \predname{Cons}(u, v) \defeq{} u <_\ell v \land \neg \exists w.\; u <_\ell w \land w <_\ell v
  \end{align}
  We then identify the leftmost and rightmost nodes labelled with rule~\eqref{rule:intro-bushy-base}:
  \begin{align}
    \predname{First}(v) &\defeq{} \msorule_\eqref{rule:intro-bushy-base}(v) \land \neg \exists w.\; \predname{Cons}(w, v) \\
    \predname{Last}(v) &\defeq{} \msorule_\eqref{rule:intro-bushy-base}(v) \land \neg \exists w.\; \predname{Cons}(v, w)
  \end{align}
  With this in place, we can define the structure of the output tree as follows:
  \begin{align}
    \varphi_\text{dom}(x) &\defeq{} \neg \msorule_\eqref{rule:intro-bushy-step}(x) \\
    \varphi^{1}_\text{child}(x, y) &\defeq{} \big( \msochild_1(x, y) \land \predname{First}(y) \big) \vee \predname{Cons}(x, y) \\
    \varphi^{2}_\text{child}(x, y) &\defeq{} \neg \predname{First}(y) \land \msorule_\eqref{rule:intro-bushy-base}(y) \land \msochild_1(x, y) \\
    \varphi_\eqref{rule:intro-lin-base}(x) &\defeq{} \predname{First}(x) \\
    \varphi_\eqref{rule:intro-lin-step}(x) &\defeq{} \exists y.\; \predname{Cons}(y, x) \\
    \begin{split}
      \varphi^{1}_\text{arg}(x, d) &\defeq{} \big( \predname{Leaf}(x) \land \msoarg_1(x, d) \big) \\
      &\phantom{{}\defeq{}} \vee \big( \msorule_\eqref{rule:intro-bushy-base}(x) \land \exists r.\; \predname{Root}(r) \land \msoarg_1(r, d) \big)
    \end{split} \\
    \varphi^{2}_\text{arg}(x, d) &\defeq{} \msoarg_2(x, d)
  \end{align}
\end{example}

\subsubsection{Complexity}

We now show that MSO proof transformations can be evaluated in logarithmic space
when the input is encoded as a tree.
Our argument uses the fact that MSO model checking over structures of 
bounded treewidth can be performed in logarithmic space~\cite{ElberfeldJT10}.
We begin by recalling the relevant notions.
Given a tree $T$ with vertex set $V$ and a subset $X \subseteq V$,
we write $T[X]$ for the subtree of $T$ induced by~$X$,
that is, the tree obtained by restricting $T$ to the nodes in $X$
while preserving the child relationship among them.
For a database $\Dnter$, we write $\dbactive(\Dnter)$
for the set of all constants occurring in $\Dnter$.

\begin{definition}[Treewidth]\label{def:treewidth}
  Let $\Dnter$ be a database. 
  A \newterm{tree decomposition} of $\Dnter$ 
  is a labelled tree $T = \tuple{V, E, B}$
  with labelling function $B \colon V \to \set{X \mid X \subseteq \dbactive(\Dnter)}$
  such that
  \begin{enumerate}
      \item for all $a \in \dbactive(\Dnter)$, the induced subtree $T[X]$
            with $X = \set{v \in V \mid a \in B(v)}$ is nonempty and connected; and \label{def:treewidth-1}
          \item for every $p(t_1, \ldots, t_k) \in \Dnter$,
            there is a node $v \in V$ with $\set{t_1, \ldots, t_k} \subseteq B(v)$. \label{def:treewidth-2}
  \end{enumerate}
  The \newterm{width} of a tree decomposition is $\max_{v \in V} \sizeof{B(v)} - 1$.
  The \newterm{treewidth} of $\Dnter$ is the minimal width of any of its tree decompositions;
  the treewidth of a proof tree $T$ is the treewidth of $\msotree(T)$.
\end{definition}

The treewidth of a proof tree $T$ over $\aprogram$ might grow arbitrarily large 
due to the presence of $\msoarg_j(v, d)$ facts in $\msotree(T)$.
However, we can find a proof tree $T'$ for the same fact
such that its treewidth only depends on $\aprogram$.
The key observation is that we can rename apart all constants in $T$
such that two occurrences of a constant remain equal only when the 
validity of the proof tree requires them to be.
This results in a tree where each constant appears only within a
connected region: namely, within the subtree rooted at the rule instance
that first introduced it. 

\begin{lemma}\label{lemma:bounded-treewidth}
  For every program $\aprogram$ there exists a constant $k_\aprogram$
  such that for every proof tree $T = \tuple{V, \troot, \ch, \tfact, \trule}$
  for $f$ over $\aprogram$ and database $\Dnter$,
  there is a database $\Dnter'$
  and a proof tree $T'$ for $f$ over $\aprogram$ and $\Dnter'$
  with treewidth at most $k_\aprogram$.
\end{lemma}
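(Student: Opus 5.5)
We rename constants apart and then read a tree decomposition of $\msotree(T')$ directly off the shape of $T'$. The bags will be local neighbourhoods in the tree.

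\textbf{Step 1: renaming constants apart.} We assign to every node $v$ of $T$ a fresh copy of the rule instance used there. Let $\mu_v$ be an assignment witnessing condition~2 (or~3b) at $v$. We replace it by $\mu'_v$ as follows. Every variable $x$ of $\trule(v)$ that occurs in the head gets the constant already fixed for the corresponding argument position of $\tfact'(v)$. For the root we first set $\tfact'(\troot) \defeq f$, so the root's head arguments are simply those of $f$. Every body-only variable gets a fresh constant $c_{v,x}$; distinct variables at the same node receive distinct fresh constants. Constants written explicitly in the rule keep their value. We then set $\tfact'(v_i) \defeq \mu'_v(\rbodyi{i}{\trule(v)})$ and proceed top-down.

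This defines $T'$ with the same tree and rule labels, and $T'$ satisfies conditions~1 and~2 by construction. For condition~3 we take $\Dnter' \defeq \leafdb{T'}$, so leaves with label $\mleaf$ are valid. For leaves carrying a bodiless rule, safety forces the head to be ground, so the fact is unchanged. Although $\Dnter'$ differs from $\Dnter$, the statement allows any $\Dnter'$; only the root fact $f$ must be preserved, and it is.

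\textbf{Step 2: connectivity of constants.} The key invariant is the following. A constant $c$ that is neither in $f$ nor a rule constant is introduced at a unique node $v_c$. It then occurs only in facts of a connected set of nodes consisting of $v_c$ and some of its descendants, because a constant propagates from a parent's rule instance to a child's fact and, via head variables, to that child's rule instance, and is never created elsewhere.

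Constants of $f$ and of the program are the exception. There are at most $\sizeof{f}$ plus the number of constants in $\aprogram$ of them. Since the predicate arities are bounded by $\aprogram$, $\sizeof{f}$ is bounded too. So we place all these constants into every bag. This costs a bounded number of elements per bag and makes them trivially connected.

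\textbf{Step 3: the tree decomposition.} We use the shape of $T'$ as the decomposition tree. The bag of node $v$ contains:
\begin{itemize}
\item $v$ itself and its children $v_1,\dots,v_n$ (at most the maximal rule length of $\aprogram$);
\item all constants appearing in $\tfact'(v)$ and in the children's facts, that is, in the rule instance at $v$;
\item the global constants from Step 2.
\end{itemize}

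We check the conditions in turn:
\begin{itemize}
\item \emph{Child facts:} each $\msochild_i(v_i,v)$ is covered by the bag of $v$.
\item \emph{Rule facts:} each $\msorule_\rho(v)$ is covered by the bag of $v$.
\item \emph{Argument facts:} each $\msoarg_j(v,d)$ is covered by the bag of $v$.
\item \emph{Connectivity of a node $w$:} $w$ appears in its own bag and its parent's bag, which are adjacent.
\item \emph{Connectivity of a non-global constant $c$:} $c$ appears in the bags of the connected region from Step 2, plus possibly parents of nodes in that region. By the invariant these parents are themselves in the region or equal to $v_c$, so the set stays connected.
\end{itemize}
The bag size is therefore bounded by a constant $k_\aprogram$ depending only on the maximal rule length, the maximal arity, and the constants of $\aprogram$.

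\textbf{Main obstacle.} The delicate point is the connectivity argument for constants shared between a parent's rule instance and a child fact. We have to ensure that a renamed constant never reappears in an unrelated subtree. The fresh-per-node-and-variable choice guarantees this, so the remaining work is routine bookkeeping.
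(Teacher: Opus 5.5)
\textbf{Overall approach.} Your strategy is the paper's. You rename constants apart node by node, then take the proof tree itself as the decomposition tree, with the bag of $v$ holding $v$, its children and the constants of the rule instance at $v$. Adding the constants of $f$ and of $\aprogram$ to every bag is more careful than the paper. The paper's bags omit them, even though a program constant can occur in two unrelated subtrees.

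\textbf{The gap: Step~1 need not produce a proof tree.} You fix each child's fact from the parent's rule alone. The child's own rule can force constants or equalities at those positions. Your sentence ``safety forces the head to be ground, so the fact is unchanged'' conflates the rule head with the fact label, which your construction dictates from above.

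\emph{Head constants.} Take the rules $\predname{goal} \leftarrow q(y)$ and $q(a) \leftarrow$. The variable $y$ is body-only at the root, so the leaf becomes $q(c_{\troot,y}) \neq q(a)$ and condition~3(b) fails. With $q(a) \leftarrow s(z)$ instead, condition~2 fails in the same way.

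\emph{Repeated head variables.} Take $\predname{goal} \leftarrow q(y_1,y_2)$, $q(z,z) \leftarrow e(z)$ and $\Dnter = \set{e(c)}$. Your distinct fresh constants yield the fact $q(c_{\troot,y_1}, c_{\troot,y_2})$, which is not an instance of $q(z,z)$. So your definition of $\mu'_v$ on head variables is not even well defined.

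The paper's pairs $\tuple{t_j,w}$ keep equal values introduced at the same node equal. They still fail once a fresh value must match an inherited value or a program constant, e.g.\ $p(x) \leftarrow q(x,y)$ above $q(z,z) \leftarrow e(z)$. So the paper's text does not close this hole either.

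\textbf{Repair.} Rename only as far as the rules permit.
\begin{enumerate}
\item Give every node $u$ a fresh copy $\rho_u$ of $\trule(u)$.
\item Impose $\rhead{\rho_{\troot}} \doteq f$, and $\rbodyi{i}{\rho_u} \doteq \rhead{\rho_{u_i}}$ for every child $u_i$ not labelled $\mleaf$.
\item The original assignments jointly form a ground unifier, so a most general unifier exists.
\item Since there are no function symbols, its classes are the connected components of the graph linking terms at equal positions. Each link joins terms of adjacent nodes, or a root term with a constant of $f$.
\item Map every constant-free class to its own fresh constant and read off $\tfact'$.
\end{enumerate}
With this renaming, bodiless leaves receive exactly their ground head, and $\mleaf$ leaves go into $\Dnter'$. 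Every fresh constant occurs only in the rule instances of a connected set of nodes. All other constants come from $f$ or $\aprogram$, which is precisely your global set. With this Step~1, your Steps~2 and~3 go through essentially unchanged.
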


\begin{proof}
  In the following, we write $\vars(\mathcal{E})$ and $\terms(\mathcal{E})$ for the set of all variables
  and all terms, respectively, that appear in an expression $\mathcal{E}$.
  We obtain $T' = \tuple{V, \troot, \ch, \tfact', \trule}$ from $T$,
  where $\tfact'$ is defined recursively for each $v \in V$ as follows:
  \begin{itemize}
  \item If $v = \troot$, then $\tfact'(v) \defeq \tfact(v)$.
  \item If $v$ is the $i$-th child of $w \in V$ and $\trule(w) = \rho$,
        let $\theta \colon \vars(\rhead{\rho}) \to \Clang$ 
        be a substitution such that $\theta(\rhead{\rho}) = \tfact'(w)$.
        Given $\tfact(v) = p(t_1, \ldots, t_k)$ and
        $\rbodyi{i}{\rho} = p(s_1, \ldots, s_k)$,
        we set $\tfact'(v) \defeq p(t_1', \ldots, t_k')$ where
        \begin{equation*}
            t_j' = \begin{cases}
              \theta(s_j) & \text{if } s_j \in \vars(\rhead{\rho}) \text{ or } s_j \in \Clang;\ \text{and} \\
              \tuple{t_j, w} & \text{otherwise.}
          \end{cases}
        \end{equation*} 
  \end{itemize}    
  Since the number of distinct constants
  in any single rule instance is bounded by the number of terms
  appearing in the rules of $\aprogram$, we can construct a tree
  decomposition whose bags correspond to rule instances, each containing
  only the constants mentioned in that instance. 
  Let $D = \tuple{V, E, B}$ where $\set{v, w} \in E$ if $v$ is a child of $w$ in $T$.
  We define $B(\ell) \defeq \terms(\tfact'(\ell)) \cup \set{\ell}$
  for each leaf node $\ell$ in $T$; for all other nodes $v \in V$
  with $\ch(v) = \tuple{v_1, \ldots, v_n}$ we set
  \begin{equation}
      B(v) \defeq \set{v} \cup \set{v_i \mid 1 \leq i \leq n} \cup \terms(\tfact'(v)) \cup \bigcup_{1 \leq i \leq n} \terms(\tfact'(v_i))
  \end{equation}
  Then $D$ defines a tree decomposition for $T$
  whose width depends only on the maximal number of distinct terms
  in a rule of $\aprogram$, giving us the desired bound
  $k_\aprogram$.
  Condition~\ref{def:treewidth-1} of Definition~\ref{def:treewidth} 
  is satisfied because every constant is
  confined to a connected subtree; 
  condition~\ref{def:treewidth-2} holds
  because all constants of each fact are witnessed in the bag of the
  node where the corresponding rule is applied. \qed
\end{proof}

Since MSO model checking over structures of
bounded treewidth can be performed in logarithmic
space~\cite[Theorem~1.2]{ElberfeldJT10}, we thus obtain the following.

\begin{theorem}\label{thm:mso-logspace}
  Let $\tau$ be an MSO proof transformation 
  from program $\aprogram_1$ to $\aprogram_2$.
  For any proof tree $T = \tuple{V, \troot, \ch, \tfact, \trule}$ for fact $f$
  over $\aprogram_1$ and database $\Dnter$, the proof DAG
  $\tau(T)$ can be computed by an $O(\log \sizeof{T})$ space bounded transducer. 
\end{theorem}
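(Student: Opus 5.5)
The plan is to reduce the computation of $\tau(T)$ to a polynomial number of MSO model-checking queries over a structure of bounded treewidth, and then apply \cite[Theorem~1.2]{ElberfeldJT10}. The transducer enumerates all candidate output elements with a fixed set of nested counters. For each component $\alpha \in \mathcal{C}$ and each $k$-tuple $\vec{v} \in V^k$ it tests $\varphi_\text{dom}^\alpha(\vec{v})$. For each domain node it tests the formulae $\varphi_\rho^\alpha$ for every $\rho \in R_2$ and outputs the unique rule that holds. For each pair of domain nodes and each index $i$ it tests $\varphi_{\text{child}}^{\alpha,\beta,i}$ to emit the child lists. For each node, argument position $j$ and constant $d$ occurring in $\msotree(T)$ it tests $\varphi_{\text{arg}}^{\alpha,j}$. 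Since $k$, $\mathcal{C}$, $R_2$, $n$, $m$ are constants depending only on $\tau$, each counter needs only $O(\log\sizeof{T})$ bits. Writing the answers to a write-only output tape never requires more work space.

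The central step is bounding the treewidth of $\msotree(T)$, which Lemma~\ref{lemma:bounded-treewidth} does not do for the given $T$ directly. I would therefore first pass, in logspace, from $T$ to the renamed tree $T'$ of that lemma. Each new constant $\tuple{t_j,w}$ is determined by the original constant and the parent node $w$. Deciding which case of the definition of $t_j'$ applies only needs the rule at $w$ and its variable positions. Both depend on $\aprogram_1$ alone and can be read locally, so $\tfact'$ is computable by a logspace transducer.

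Next I would argue that $\tau(T)$ can be recovered from $\tau(T')$ by mapping each renamed constant back to its original. The natural choice is the map $\tuple{t,w}\mapsto t$, extended to keep unrenamed constants fixed. Since $T'$ is a proof tree for $f$ over $\aprogram_1$ and $\Dnter'$, the semantic requirement (Definition~\ref{def:mso-proof-tree-transformation}) gives a valid proof DAG $\tau(T')$ for $f$ over $\aprogram_2$ and $\Dnter'$. The renaming map sends $\Dnter'$ onto $\leafdb{T}\subseteq\Dnter$ and is a homomorphism, so its image is a proof DAG for $f$ over $\aprogram_2$ and $\Dnter$.

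This is where I expect the main obstacle. The statement concerns $\tau(T)$ itself, so either the transformation must commute with such constant renamings, or we must read the theorem as computing a valid output. Commutation is plausible because the MSO formulae can only compare constants through $\msoarg_j$ equalities. The $\varphi_{\text{arg}}$ formulae, however, may select constants by equality tests that differ between $T$ and $T'$. My first attempt is therefore to avoid this detour entirely and apply Elberfeld et al.\ to $\msotree(T)$ restricted to the node universe. I would treat argument data by replacing each constant $d$ with a canonical node carrying it, namely the nearest node at which $d$ was introduced, and expressing constant equality across nodes as an MSO-definable relation over the bounded-treewidth tree $T'$. Then $\varphi_{\text{arg}}^{\alpha,j}(\vec v,d)$ is evaluated by quantifying over such representative nodes and reading the original constant of the chosen witness from the input. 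Composing the logspace renaming, the logspace model checking and the final relabelling, all logspace transducers compose, yields the $O(\log\sizeof{T})$ bound.
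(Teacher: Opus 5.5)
Your main line matches the paper's proof. You compute the renamed tree $T'$ of Lemma~\ref{lemma:bounded-treewidth} on the fly. You then evaluate each formula over $\msotree(T')$ via Elberfeld et al.\ with $O(k)$ counters, and finally map every $\tuple{t,w}$ back to $t$. Your homomorphism argument that the relabelled $\tau(T')$ is a proof DAG for $f$ over $\aprogram_2$ and $\Dnter$ is correct.

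One small inaccuracy: $\tfact'$ is not ``read locally''. In the inherited case, $t'_j=\theta(s_j)$ comes from $\tfact'(w)$, which may itself carry a tag introduced further up. So, as in the paper, you must walk upward to the ancestor whose rule stops propagating that position. This stays in logarithmic space, because you only store the current node and a position index bounded by the maximal arity.

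More importantly, the obstacle you raise is real, and the paper does not address it. It simply calls the relabelled $\tau(T')$ ``$\tau(T)$''. What both arguments actually establish is that \emph{some} valid proof DAG for $f$, namely the relabelled $\tau(T')$, is computable in space $O(\log\sizeof{T})$.

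Your proposed repair, however, cannot work. Whether two different introduction points carry the same \emph{original} constant (e.g.\ the same $a$ in database leaves of unrelated branches) is arbitrary data that $T'$ deliberately forgets. It is therefore not MSO-definable over $T'$. Supplying it as an extra relation reinstates exactly the sharing that gives $\msotree(T)$ unbounded treewidth.

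In fact the literal statement is not obtainable at all. The universe of $\msotree(T)$ contains the constants, so MSO can express 3-colourability of the graph formed by the leaf facts. Consider the following construction:
\begin{itemize}
\item let $\aprogram_1$ consist of $s \leftarrow e(x,y)$, $s \leftarrow s, e(x,y)$ and $\predname{goal}\leftarrow s$;
\item let $\aprogram_2$ add the rule $\predname{goal}\leftarrow s, e(x,y)$;
\item let the interpretation be the identity, except that the root uses the new rule (with an MSO-selected $e$-leaf as extra child) iff the $e$-leaves form a 3-colourable graph.
\end{itemize}
Every output is a valid proof DAG, so this is an MSO proof transformation. Yet the root label of $\tau(T)$ decides an NP-complete problem, so no logspace transducer computes $\tau(T)$ unless logarithmic space equals NP.

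So drop the canonical-node attempt. State the theorem in the ``valid output'' reading, which is what your detour (and the paper's proof) actually proves.
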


\begin{proof}
  If $\tau$ is an MSO proof transformation,
  there exists a $k$-dimensional MSO interpretation and set of components $\mathcal{C}$
  satisfying Definition~\ref{def:mso-proof-tree-transformation}.
  For a proof tree $T$, we evaluate each MSO formula
  over $\msotree(T')$, where
  we obtain $T' = \tuple{V, \troot, \ch, \tfact', \trule}$ 
  as in Lemma~\ref{lemma:bounded-treewidth}.   
  Note that we do not need to store $T'$ as an intermediate step, 
  since we can compute $\tfact'(v)$ for any $v \in V$ on demand in logarithmic space.
  For each term position $j$ in $\tfact(v)$, we determine $t_j'$ by 
  walking upward from $v$ toward the root, 
  following how the variable at position $j$ is propagated 
  through the heads of successive rule instances.
  If the variable appears in the head of the current rule, 
  we continue upward, inheriting the corresponding position 
  in the parent's fact. Otherwise, the current node is the ancestor 
  that introduced this constant, and we return the tagged 
  pair $\tuple{t_j, w}$.

  By Lemma~\ref{lemma:bounded-treewidth}, $\msotree(T')$ has bounded treewidth, so each formula can be evaluated in logarithmic space~\cite[Theorem~1.2]{ElberfeldJT10}.
  To instantiate each formula, we need $k$ counters over the vertex set $V$
  for $\varphi_\text{dom}(\vec{x})$ and $\varphi_\rho(\vec{x})$,
  $2k$ counters over $V$ for $\varphi_\text{child}(\vec{x}, \vec{y})$,
  $k$ counters over $V$, and $1$ counter over the set of all terms occurring in fact labels
  for $\varphi_\text{arg}(\vec{x}, d)$. 
  This yields a proof DAG $\tau(T)$ for $f$ over $\Dnter'$.
  We rename each fresh constant $\tuple{t, w}$ to $t$ to get a proof DAG for $f$ over $\Dnter$. \qed
\end{proof}
While this suggests that there are program transformations that
cannot be reversed using an MSO proof transformation (unless
$\LogSpace = \NLogSpace = \LogCFL$), we find that MSO proof tree
transformations are expressive enough to capture many practical
transformations. We have already seen that Example~\ref{ex:mso-transitivity} can be expressed; we provide
further examples in \fullorconf{Appendix~\ref{appendix:mso}}{the extended technical report}.

\section{Conclusions and Outlook}

We have studied the problem of reversing program transformations at
the level of proof trees and established $\PTime$-hardness for the
DAG-encoded case and $\LogCFL$-hardness for the tree-encoded case.  We
have identified two practical transformation languages: Proof tree
homomorphisms are characterised by uniform containment, and capture
local transformations, such as transforming left-linear proof trees
into bushy proof trees (see Example~\ref{ex:uniform}).  MSO proof
tree transformations handle a strictly larger class of restructurings
(such as transforming from bushy proof trees back to left-linear
proof trees, see Example~\ref{ex:mso-transitivity}), but can still be
evaluated efficiently in logarithmic space on tree-encoded input.

For future work, several directions remain open.  For tree-encoded
inputs, there remains a gap between the \LogCFL lower bound and the
\PTime upper bound. It is unclear whether this can be closed.
Similarly, a characterisation of the program transformations that admit a
reversal by an MSO proof transformation, along the lines of the
characterisation of first-order tree-to-tree transductions by
Bojańczyk and Doumane~\cite{BojanczykD20}, is desirable. On the more
practical side, we are implementing proof transformations in our
Datalog engine Nemo~\cite{Ivliev+:Nemo2024,NemoDemo:ICLP23} and plan
to assess the practical overhead of proof tree reconstruction on
realistic knowledge graphs and optimised rule sets. Lastly, it would
be interesting to study how proof tree transformations generalise to
extensions of Datalog such as existential rules, aggregates, or
stratified negation.

\begin{credits}
    \subsubsection{\ackname}
    This work is supported by 
    Deutsche Forschungsgemeinschaft (DFG, German Research Foundation) in project number 389792660 (TRR 248, \href{https://www.perspicuous-computing.science/}{Center for
    Perspicuous Systems}),
    by the Bundesministerium für Forschung, Technologie und Raumfahrt (BMFTR, Federal
    Ministry of Research, Technology and Space)
    in the \href{https://www.scads.de/}{Center for Scalable Data Analytics and
    Artificial Intelligence} (ScaDS.AI), and in DAAD project 57616814 (\href{https://secai.org/}{SECAI, School of Embedded Composite AI})
    as part of the program Konrad Zuse Schools of Excellence in Artificial Intelligence.

    \subsubsection{Supplemental Material Statement.}
    While the paper is self-contained, we provide further examples of
    proof transformations \fullorconf{in the appendix}{in the appendix
    of the full version, which is publicly available online}.

    \subsubsection{Declaration of use of Generative AI.}
    Claude (Anthropic) was used for proof-reading and preparing the figures.
    It also identified and helped correct an error in the proof of Theorem~\ref{thm:hardness-tree}.

\end{credits}

\bibliographystyle{splncs04}
\bibliography{references}

\iffullversion
\newpage
\appendix

\section{Examples of Proof Tree Homomorphisms}\label{appendix:uniform}

In this section, we give further examples of program transformations
that can be reversed using proof tree homomorphisms.
Let $\aprogram_1 = \tuple{R_1, \Pinput, \Poutput}$ be a program.
We consider several common classes of program transformations
that produce a program $\aprogram_2 = \tuple{R_2, \Pinput, \Poutput}$
with $\aprogram_2 \contained \aprogram_1$,
and provide proof tree homomorphisms from $\aprogram_2$ to $\aprogram_1$.

\subsection{Atom Permutation}

A simple program transformation swaps atoms 
within rule bodies~\cite{PettorossiP94}. 
This has no effect on the derived facts 
but may affect evaluation order in a reasoner.
Formally, an \newterm{atom permutation} is a family
$\{\pi_\rho\}_{\rho \in R_1}$, where $\pi_\rho \colon [n] \to [n]$ 
is a permutation with $\sizeof{\rho} = n$ for each $\rho \in R_1$, and $[n] \defeq \set{1, \dotsc, n}$.
For a rule $\rho = H \leftarrow B_1, \ldots, B_n$,
we define 
\begin{equation*}
    \pi_\rho(\rho) \defeq H \leftarrow B_{\pi_\rho^{-1}(1)}, \ldots, B_{\pi_\rho^{-1}(n)}.
\end{equation*}
The transformed program is $\aprogram_2$ with $R_2 = \set{\pi_\rho(\rho) \mid \rho \in R_1}$.
To reverse this transformation, 
we define, for each $\pi_\rho(\rho) \in R_2$,
the proof tree template $S_{\pi_\rho(\rho)}$
for the rule $\pi_\rho(\rho)$ over $\aprogram_1$:

\begin{center}
\begin{tikzpicture}[>=Stealth,
    nd/.style={draw, rounded corners, font=\footnotesize, inner sep=3pt},
    lf/.style={draw, rounded corners, font=\footnotesize, inner sep=3pt, dashed},
    rl/.style={font=\scriptsize, text=black!50, anchor=west, inner sep=1pt, xshift=1pt},
  ]
  \node[lf] (P1) at (0,0) {$B_1$};
  \node at (1.5,0) {$\cdots$};
  \node[lf] (Pn) at (3,0) {$B_n$};
  \node[nd] (R) at (1.5,1.4) {$H$};
  \node[rl] at (R.east) {$\rho$};
  \draw[->] (P1) -- (R);
  \draw[->] (Pn) -- (R);
\end{tikzpicture}
\end{center}

\subsection{Rule Unfolding}

Rule unfolding replaces a body atom with the body 
of a rule that derives it,
eliminating one level of indirection 
in the derivation~\cite{PettorossiP94}.

For a rule $\rho$ and a substitution $\sigma$, we write $\sigma(\rho)$
for the rule obtained by applying $\sigma$ to each atom.
Let $r = H \leftarrow B_1, \ldots, B_n \in R_1$ be a rule 
and $B_k$ be a body atom of $r$ for some $1 \leq k \leq n$.  
Let $d = A \leftarrow C_1, \ldots, C_p \in R_1$ be another rule (with $d \neq r$) 
such that the head of $d$ unifies with $B_k$ 
via a most general unifier~$\sigma$.
The \newterm{unfolding} of $r$ at position $k$ using $d$ is the rule
\begin{align}
    \opfont{unfold}_{r, k}(d) 
    \defeq \sigma\bigl(H \leftarrow B_1, \ldots, B_{k-1}, 
    C_1, \ldots, C_p, B_{k+1}, \ldots, B_n\bigr)
\end{align}
A transformation from $\aprogram_1$ to $\aprogram_2$ is a \newterm{rule unfolding}
if for every $\rho \in R_2$, either $\rho \in R_1$ holds,
or there are $r, d \in R_1$ and $1 \leq k \leq \sizeof{r}$
such that $\rho = \opfont{unfold}_{r, k}(d)$.

To reverse unfolding, consider an unfolded rule 
$\rho' = \opfont{unfold}_{r, k}(d)$.
The template $S_{\rho'}$ reconstructs the original 
two-step derivation:

\begin{center}
\begin{tikzpicture}[>=Stealth,
    nd/.style={draw, rounded corners, font=\footnotesize, inner sep=3pt},
    lf/.style={draw, rounded corners, font=\footnotesize, inner sep=3pt, dashed},
    rl/.style={font=\scriptsize, text=black!50, anchor=west, inner sep=1pt, xshift=1pt},
  ]
  \node[lf] (B1) at (0,0) {$\sigma(B_1)$};
  \node at (1.4,0) {$\cdots$};
  \node[lf] (Bk1) at (2.6,0) {$\sigma(B_{k\!-\!1})$};
  
  \node[lf] (C1) at (4.4,0) {$\sigma(C_1)$};
  \node at (5.6,0) {$\cdots$};
  \node[lf] (Cp) at (6.8,0) {$\sigma(C_p)$};
  
  \node[lf] (Bk2) at (8.6,0) {$\sigma(B_{k\!+\!1})$};
  \node at (9.8,0) {$\cdots$};
  \node[lf] (Bn) at (11,0) {$\sigma(B_n)$};
  
  \node[nd] (D) at (5.6,1.4) {$\sigma(A)$};
  \node[rl] at (D.east) {$d$};
  
  \node[nd] (R) at (5.5,3) {$\sigma(H)$};
  \node[rl] at (R.east) {$r$};
  
  \draw[->] (C1) -- (D);
  \draw[->] (Cp) -- (D);
  
  \draw[->] (B1) -- (R);
  \draw[->] (Bk1) -- (R);
  \draw[->] (D) -- (R);
  \draw[->] (Bk2) -- (R);
  \draw[->] (Bn) -- (R);
\end{tikzpicture}
\end{center}

For rules $\rho \in R_2 \cap R_1$ that were not unfolded,
the template $S_\rho$ is the identity:
a root node deriving the head of $\rho$ connected to the leaf nodes 
that are labelled by the body atoms of $\rho$.

\newpage
\section{Examples of MSO Proof Transformations}\label{appendix:mso}

In this section, we present examples of program transformations
that can be reversed using MSO proof transformations.
We let $\aprogram_1 = \tuple{R_1, \Pinput, \Poutput}$ be a program
and define program transformations obtaining $\aprogram_2 = \tuple{R_2, \Pinput, \Poutput}$,
which can be reversed by proof transformations from $\aprogram_2$ to $\aprogram_1$.

\subsection{Rule Folding}

Rule folding replaces multiple rules 
whose bodies share a common pattern 
by a single rule that uses an auxiliary 
predicate~\cite{PettorossiP94}.
Let $c_1, \ldots, c_n$ and $d_1, \ldots, d_n$ be pairwise distinct rules in $R_1$,
and let $A$ and $H$ be atoms and $\bar{B}$ a list of atoms such that:
\begin{itemize}
    \item $\rhead{d_i}$ is unifiable with $A$ 
    using most general unifier $\sigma_i$,
    \item $c_i = \sigma_i(H \leftarrow \bar{B},\, \rbody(d_i))$, and
    \item the head of every rule $r \notin \{d_1, \ldots, d_n\}$ 
    is not unifiable with $A$.
\end{itemize}
The \newterm{folding} of $c_1, \ldots, c_n$ using $d_1, \ldots, d_n$
is the rule $r \defeq H \leftarrow A,\, \bar{B}$.
The transformed program $\aprogram_2$ is obtained from $\aprogram_1$
by replacing $\{c_1, \ldots, c_n\}$ with $\{r\}$.

We can reverse this transformation using a $1$-dimensional MSO proof transformation
with a single component.
The idea is to replace the rule $r$ with rule $c_i$ if the $A$ was derived by rule $d_i$.
\begin{align}
    \varphi_\text{dom}(x) &\defeq 
    \bigwedge_{i=1}^{n} \neg\, \msorule_{d_i}(x) \\
    \varphi_{c_i}(x) &\defeq 
    \msorule_r(x) \land 
    \exists w.\; \msochild_1(w, x) \land \msorule_{d_i}(w) \\
    \varphi_\rho(x) &\defeq \msorule_\rho(x) 
    \qquad \text{for } \rho \notin \set{c_1, \ldots, c_n}
\end{align}
The child formula rewires the children of an $r$-node,
placing the $\bar{B}$-children first 
and then the children inherited from the absorbed $d_i$-node:
\begin{align}
  \begin{split}
    \varphi^j_\text{child}(x, y) &\defeq
    \big(\neg\, \msorule_r(y) \land \msochild_j(x, y)\big) \\
    & \lor
    \big(\msorule_r(y) \land \msochild_{j+1}(x, y)\big)
    \qquad \text{if } 1 \leq j \leq |\bar{B}|
  \end{split} \\
  \begin{split}
    \varphi^j_\text{child}(x, y) &\defeq
    \big(\neg\, \msorule_r(y) \land \msochild_j(x, y)\big) \\
    &\lor
    \big(\msorule_r(y) \land 
    \exists w.\; \msochild_1(w, y) 
    \land \msochild_{j - |\bar{B}|}(x, w)\big)
    \qquad \text{if } j > |\bar{B}|
    \end{split} \\
    \varphi^j_\text{arg}(x, d) &\defeq \msoarg_j(x, d)
\end{align}

\subsection{Filter Propagation}

Filter propagation is a technique that moves filter conditions 
such as $x = 3$ through the rules of a program,
so that they are checked as early as possible 
during bottom-up evaluation~\cite{HK2026}.

Formally, a \newterm{filter propagation} is a mapping $f$ 
that assigns to each body atom of each rule in $R_1$
a new rule and body position in which it should appear.
That is, $f$ maps atom positions $\tuple{\rho, i}$ 
with $\rho \in R_1$ and $1 \leq i \leq |\rho|$
to tuples $\tuple{\rho', j, \sigma}$,
where $\rho' \in R_1$, $j \geq 1$
and $\sigma$ is a substitution adapting the variables.
Atom positions with $f(\rho, i) = \tuple{\rho, i, \mathit{id}}$ are \newterm{stationary};
all others are \newterm{moved}.
The transformed program $\aprogram_2 = \tuple{R_2, \Pinput, \Poutput}$
has, for each $\rho' \in R_1$, a rule $f(\rho')$ 
whose head is $\rhead{\rho'}$
and whose $j$-th body atom is $\sigma(\rbodyi{i}{\rho})$
for the unique $\tuple{\rho, i}$ with $f(\rho, i) = \tuple{\rho', j, \sigma}$.
We set $R_2 \defeq \set{f(\rho) \mid \rho \in R_1}$.

\begin{example}
Consider the program $\aprogram_1$:
\begin{align}
    \rho_1 &:\ r(x, y) \leftarrow e(x, y) \\
    \rho_2 &:\ r(x, y) \leftarrow r(x, z), e(z, y) \\
    \rho_3 &:\ \predname{out}(y) \leftarrow r(x, y), x = a
\end{align}
The filter $x = a$ at position $(\rho_3, 2)$ is propagated
to the base case, yielding:
\begin{align*}
    f(\rho_1) &:\ r(x, y) \leftarrow e(x, y), x = a \\
    f(\rho_2) &:\ r(x, y) \leftarrow r(x, z), e(z, y) \\
    f(\rho_3) &:\ \predname{out}(y) \leftarrow r(x, y)
\end{align*}
Here $f(\rho_3, 2) = \tuple{\rho_1, 2, \mathit{id}}$
and all other positions are stationary.
In a proof tree for $\aprogram_2$,
the filter appears at every base-case leaf;
in a proof tree for $\aprogram_1$,
it appears only at the root.
\end{example}

An MSO proof transformation from $\aprogram_2$ to $\aprogram_1$ 
can be defined using a $1$-dimensional MSO interpretation 
with a single component.
\begin{align}
    \varphi_\text{dom}(x) &\defeq \top \\
    \varphi_\rho(x) &\defeq \msorule_{f(\rho)}(x) \\
    \begin{split}
    \varphi^i_\text{child}(x, y) &\defeq 
    \bigvee_{\substack{f(\tuple{\rho, i}) = \tuple{\rho', j, \sigma}}}
    \msorule_{f(\rho)}(y) \land
    \exists p.\; \msochild_j(x, p) \\
    &\land \msorule_{f(\rho')}(p) 
    \land p \leq y
    \end{split} \\
    \varphi^j_\text{arg}(x, d) &\defeq \msoarg_j(x, d)
\end{align}
The domain formula retains all nodes
and the rule formula recovers the original rule name.
The argument formula is inherited directly,
since filter propagation does not change rule heads.
The child formula 
links a filter that has been pushed down in the tree to its original ancestor.

\subsection{Projection Pushing}

Projection pushing removes argument positions in predicates
such that the output of the program is unaffected~\cite{RamakrishnanBK88}.

Formally, let $\aprogram_1 = \tuple{R_1, \Pinput, \Poutput}$ be a program.
A \newterm{position elimination} assigns to each IDB predicate $p$ 
a set of \newterm{retained positions} 
$U(p) \subseteq \{1, \ldots, \arity(p)\}$
such that all positions of output and input predicates are retained.
For an atom $p(t_1, \ldots, t_m)$, 
we write $p|_U(t_1, \ldots, t_m)$ 
for the atom $p'(t_{j_1}, \ldots, t_{j_k})$ 
where $U(p) = \{j_1 < \cdots < j_k\}$ 
and $p'$ is a fresh predicate of arity $k$.
For a rule $\rho = H \leftarrow B_1, \ldots, B_n$,
the \newterm{reduced rule} is
\begin{equation}
    \rho|_U \defeq H|_U \leftarrow B_1|_U, \ldots, B_n|_U
\end{equation}
provided $\rho|_U$ is safe. 
The transformed program is 
$\aprogram_2 = \tuple{R_2, \Pinput, \Poutput}$ 
with rule set $R_2 = \set{\rho|_U \mid \rho \in R_1}$.

\begin{example}
Consider the program $\aprogram_1$ computing the transitive closure of $e$:
\begin{align*}
    r(x, y) &\leftarrow e(x, y) \\
    r(x, y) &\leftarrow r(x, z), e(z, y) \\
    \predname{out}(y) &\leftarrow r(x, y)
\end{align*}
If we are only interested in the output predicate $\predname{out}$,
the first position of $r$ can be eliminated.
Setting $U(r) = \set{2}$ yields $\aprogram_2$:
\begin{align*}
    r'(y) &\leftarrow e(x, y) \\
    r'(y) &\leftarrow r'(z),\, e(z, y) \\
    \predname{out}(y) &\leftarrow r'(y)
\end{align*}
\end{example}

This transformation can be reversed using 
a $1$-dimensional MSO interpretation 
with a single component.
The tree structure is preserved 
and the rule labelling recovers the original rule from the reduced one
for each $\rho \in R_1$:
\begin{align}
  \varphi_\text{dom}(x) &\defeq \top \\
  \varphi^i_\text{child}(x, y) &\defeq \msochild_i(x, y) \\
  \varphi_{\rho}(x) &\defeq \msorule_{\rho|_U}(x)
\end{align}

Defining the argument formula is a bit more involved.
For retained positions, the argument value is read directly 
from the reduced representation.
For eliminated positions, we trace the value downward through the tree.
Fix a rule $\rho \in R_1$ with head predicate $p$ 
and let $j \notin U(p)$ be an eliminated position.
Let $x_j$ denote the variable at position $j$ in $\rhead{\rho}$.
By safety, $x_j$ occurs in at least one body atom of $\rho$.
For each body atom $\rbodyi{i}{\rho}$ with predicate $q$
in which $x_j$ occurs at position $j'$, 
exactly one of the following holds:
\begin{enumerate}
    \item $q$ is an input predicate, or $j' \in U(q)$:
    the value can be read at the $i$-th child of $x$.
    \item $q$ is an IDB predicate and $j' \notin U(q)$:
    position $j'$ is itself eliminated, 
    and we must recurse at the $i$-th child.
\end{enumerate}

To encode this recursive computation, we define a fixed finite graph
that captures how variable bindings propagate through rules.
The \newterm{predicate position graph} of $\aprogram_1$
is the labelled directed graph $G_\aprogram$ whose vertices are
predicate-position pairs $\tuple{p, j}$
with $p$ a predicate and $1 \leq j \leq \arity(p)$.
It contains an edge $\tuple{p, j} \to \tuple{q, j'}$ labelled $\tuple{\rho, i}$
whenever $\rho \in R_1$ has head predicate $p$,
the variable at position $j$ in $\rhead{\rho}$
also appears at position $j'$ in $\rbodyi{i}{\rho}$,
and that body atom has predicate $q$.
A vertex $\tuple{q, j'}$ is \newterm{terminal}
if $q \in \Pinput$ or $j' \in U(q)$.

The MSO formula $\Phi_{p,j}(x, d)$
quantifies over a downward path in the proof tree
together with a run of $G_\aprogram$ along that path.
For each vertex $\tuple{q, j'}$ of $G_\aprogram$,
we introduce a monadic variable $X_{q,j'}$
representing the set of tree nodes on the path
that are visited in graph node $\tuple{q, j'}$.
Let $V(G_\aprogram)$ be the set of vertices in $G_\aprogram$
and let $\bar X_{q, j'}$ be a list of monadic second-order variables
$X_{q, j'}$ for each vertex $\tuple{q, j'}$ in $G_\aprogram$.

\begin{align}
  \begin{split}
    \Phi_{p,j}(x, d) &\defeq
    \exists \bar X_{q,j'}.\;\exists y.\; \Biggl(
    x \in X_{p,j} \land \forall v.\; \biggl( \\
    &\bigwedge_{\tuple{q, j'} \in V(G_\aprogram)} \biggl(
    v \in X_{q,j'} \Rightarrow \\
    &\qquad \bigvee_{\substack{\tuple{q, j'} \to \tuple{q', j''} \\ \text{labelled } \tuple{r, i}}} \biggl(
    \msorule_{r|_U}(v) \land
    \exists w.\; \msochild_i(w, v) \\ \\
    &\qquad\qquad \land
    \begin{cases}
        w = y \land \msoarg_{j''}(y, d)
        & \text{if } \tuple{q', j''} \text{ terminal} \\
        w \in X_{q', j''}
        & \text{otherwise}
      \end{cases}
      \biggr)\biggr)\biggr)\Biggr)
    \end{split}
\end{align}
At the start, node $x$ is in state $\tuple{p, j}$.
At each step, a node $v$ in state $\tuple{q, j'}$
picks an edge $\tuple{q, j'} \to \tuple{q', j''}$ labelled $\tuple{\rho, i}$,
verifies that $v$ uses rule $\rho|_U$,
and descends to child $i$.
If $\tuple{q', j''}$ is terminal,
the value $d$ is read at position $j''$ of that child.
Otherwise, the child enters state $\tuple{q', j''}$
and the process continues.

The argument formula distinguishes retained and eliminated positions:
\begin{align}
    \varphi^j_\text{arg}(x, d) &\defeq
    \bigvee_{\substack{r \in R_1 \\ \text{head pred.\ } p}}
    \msorule_{r|_U}(x) \;\land\;
    \begin{cases}
        \msoarg_{j'}(x, d)
        & \text{if } j \in U(p) \\
        \Phi_{p,j}(x, d)
        & \text{if } j \notin U(p)
    \end{cases}
\end{align}
where $j' = |\{k \in U(p) \mid k \leq j\}|$ 
is the position of $j$ in the reduced predicate.
For retained positions, the value is read directly 
from the reduced representation 
(adjusting for the change in indexing).
For eliminated positions, the value is recovered via $\Phi_{p,j}$.

\fi

\end{document}